\numberwithin{equation}{section}
\newcommand{\trace}{\mathrm{tr}}
\newcommand{\ic}{\dot{\iota}}
\renewcommand{\@makefnmark}{}
\theoremstyle{definition}
\newtheorem{remark}{Remark}[section]
\newtheorem{theorem}{Theorem}[section]
\begin{document}
\title{The Transition from Quantum to Classical in weak measurements \\
and reconstruction of Quantum Correlation}

\author{Vadim V. Vorobyov}%
\email{v.vorobyov@pi3.uni-stuttgart.de}
\affiliation{3rd Institute of Physics, IQST and Centre for Applied Quantum Technologies, University of Stuttgart, 70569, Stuttgart, Germany}%

\author{Jonas Meinel}%
\affiliation{3rd Institute of Physics, IQST and Centre for Applied Quantum Technologies, University of Stuttgart, 70569, Stuttgart, Germany}%
\affiliation{Max-Planck Institute for Solid State Research, Stuttgart, Germany}%

\author{Hitoshi Sumiya}%
\affiliation{Advanced Materials Laboratory, Sumitomo Electric Industries Ltd., Itami 664-0016, Japan}%

\author{Shinobu Onoda}%
\affiliation{Takasaki Advanced Radiation Research Institute, National Institutes for Quantum and Radiological Science and Technology, Takasaki 370-1292, Japan}%

\author{Junichi Isoya}%
\affiliation{Faculty of Pure and Applied Sciences, University of Tsukuba, Tsukuba 305-8573, Japan}%

\author{Oleg Gulinsky}%
\email{gulinskyoleg8@gmail.com}
\noaffiliation

\author{Jörg Wrachtrup}%
\email{j.wrachtrup@pi3.uni-stuttgart.de}
\affiliation{3rd Institute of Physics, IQST and Centre for Applied Quantum Technologies, University of Stuttgart, 70569, Stuttgart, Germany}%
\affiliation{Max-Planck Institute for Solid State Research, Stuttgart, Germany}%

\underline{\textbf{}}
\begin{abstract}
The ability to track and control the dynamics of a quantum system is the key to quantum technology.
Despite its central role, the quantitative reconstruction of the dynamics of a single quantum system from the macroscopic data of the associated  observable remains a problem.  We consider this problem in the context of weak measurements of a single nuclear carbon spin in a diamond with an electron spin as a meter at room temperature,
which is a well-controlled and understandable bipartite quantum system  \cite{Hansen}, \cite{TaminiauT},\cite{Doherty}.
 In this work, based on a detailed theoretical analysis of the model of  experiment, we study the relationship between the statistical properties of the macroscopic readout signal of the spin of a single electron and the quantum dynamics of the spin of a single nucleus, which is characterized by a parameter associated with the strength of the measurement. We determine the parameter of measurement strength in separate experiments and use it to reconstruct the quantum correlation. 
   We control the validity of our approach applying the Leggett-Garg test.

\end{abstract}

\maketitle

\section{Introduction}

The advantage of quantum information systems is based on
the ability to store and process information encoded
in a set of qubits. An obstacle to the implementation of complex quantum algorithms is the decoherence effect that occurs in the process of interaction and transformation of information.
To transfer information from one subsystem of a complex system to another, both subsystems must be entangled, which inevitably leads to irreversible state changes and loss of information.
The ability to track and control the dynamics of a quantum system is the key to quantum technology.

To the best of our knowledge, it is impossible to completely and unambiguously
restore the state of quantum systems  based on the macroscopic measurements of  observables associated with the system.
  In principle, the state of a quantum system can be characterized using the Wigner distribution function.
However, the reconstruction procedure for this function is extremely sensitive to measurement errors, is inefficient, and requires a large number of experiments \cite{Smithey}, \cite{Lvovsky}.  In this work, we pose a more modest problem of restoring the correlation function of the dynamics of an observable associated with  a quantum object. We control the accuracy of our reconstruction by checking the non-classical properties of the resulting correlation function using the Leggett-Garg test.

In contrast to the classical filtering and control problem, the main difficulty in the quantum case is that the probabilistic properties of the classical (macroscopic) process at the output of the system and the quantum process that generates it differ significantly (see e.g.  \cite{EPR},  \cite{Bush}). In this regard, it is interesting to trace theoretically the  transformation of a quantum process into a macroscopic output process and test this analysis with experimental results on a simple model.
On this path, we are guided by the idea that measurements are not an instantaneous jump-like act (“collapse” of the wave function), but a process in which one quantum state is replaced by another, pure or mixed, under the control of some interaction Hamiltonian, while the final stage of this process is just a non-unitary nature (see section \ref{bhnghXY}).

The classical probability measure $\mathds{P}$ is defined on  the probability space $(\Omega,\mathscr{F})$ or, equivalently,  on a Boolean algebra $\mathscr{A}$ (distributive lattice) and satisfies the strong additivity property
\begin{equation}\label{StrogA}
 \mathds{P}(A)+\mathds{P}(B)=\mathds{P}(A\wedge B)+\mathds{P}(A\vee B), \quad A, B\in \mathscr{A}.
\end{equation}
 On the other hand, the quantum probability is defined
   on an orthomodular (non-distributive) lattice generated by projections of a Hilbert space,  where a finite function (dimension function) with a property similar to \ref{StrogA} exists only in special cases
  (see \cite{Birkhoff}, and   \cite{Redei}).
  Moreover,  there is no dispersion free state (corresponding to the
$\delta$-measures in classical case) on a complex Hilbert space of
dimension greater than 2 (Gleason's theorem \cite{Glsn}, see also \cite{Bunce}). In other words, randomness is an inevitable property of quantum states. 

Strong additivity is a distinctive property that fundamentally separates classical probability from quantum probability, which manifests itself in a variant of the proof of Wigner and d'Espagnat's formulation \cite{Wigner}, \cite{Espagnat} for Bell's theorem, given for clarity in Appendix \ref{MatBell}.

   J. Bell \cite{Bell1}, \cite{Bell2}, using Bohm's version \cite{Bohm1}(ch.22)  of the Einstein-Podolsky-Rosen (EPR) arguments \cite{EPR},
   introduced test inequalities relating correlations between measurements in separate parts of the classical composite system. While Bell's inequalities examine correlations of bounded random variables over space,  the more recent Leggett-Garg inequality \cite{Leggett} examines correlations over time.
The simplest Bell and Leggett-Garg type inequalities can be represented in the same form
$$
C_{21}+C_{32}-C_{31}\leq 1,
$$
where $C_{ij}$ are the two-points correlation functions with different pairs of space or time arguments.
This inequality limits the strength of spatial or temporal correlations that can arise in a classical framework and is expected to be violated by quantum mechanical unitary dynamics such as Rabi oscillations or Larmor precession.

From a mathematical point of view, Bell-type inequalities arose in classical probability more than a hundred years before Bell's discovery. The first appearance of such a result is associated (see \cite{Pitowsky}, \cite{Hess-Leg2016})  with the name of the creator of Boolean algebra \cite{Boole}. The final solution to this problem is formulated as  Kolmogorov's consistency theorem (see e.g. \cite{Doob}).
  The key mathematical condition of Bell's type theorems is that all (bounded) random variables given in space or in time are defined on the same probability space, which is especially significant  in the case of spatially separated subsystems in the Bell test.
In terms of applicability, a critical limitation of Bell-type theorems is that inequalities are derived under the assumption of non-invasive measurements.

We consider the tracking problem in the framework of  measurements of a single nuclear  spin in a diamond with an electron spin as a meter at room temperature \cite{Hansen},\cite{TaminiauT},\cite{Doherty}.
 This is a well-controlled and understandable bipartite quantum system. A feature of  a substitutional nitrogen nuclear spin is that it lies on the NV axis and therefore its hyperfine tensor  is rotationally symmetric and collinear with the NV axis.

The possibility of quantum non-demolition
(QND) measurement of single nitrogen nuclear spins ($^{14}N$,$^{15}N$)  to probe different charge
states of the NV center was demonstrated  in \cite{neumannS}.
 In this case, the system observable is the nuclear spin $I_z$, which undergoes the Rabi oscillations, while the  probe observable is the electron spin $S_z$.
This is a realizable QND, which  is close to  the conditions of  projective measurement \cite{QND-PM}.

Tracking the precession of a single nuclear $^{13}C$
 spin  using periodic weak measurements was demonstrated in \cite{RBL}, \cite{Cujia}.  The nuclear spin of carbon in diamond weakly interacts with the electron spin of a nearby nitrogen vacancy center, which acts as an optically readable measurement qubit.
The nuclear spin undergoes a free precession around
the $z$ axis with an angular velocity given by the Larmor frequency
$\omega$.  The precession is monitored  by probing the nuclear spin component $I_x$ by means of a conditional rotation via the effective interaction Hamiltonian $H\approx A_\|  S_zI_x$, which  couples $I_x$ with $S_z$ component of electronic spin.

The experimental setup of our work basically coincides with the  scheme of \cite{RBL}. However, the problem of restoring the correlation function of a quantum signal from the data of successive (weak) measurements and, in particular, the subsequent application of the Leggett-Garg  test required a significant development of the experiment and an increase in the measurement accuracy.

To the best of our knowledge  (see \cite{Emery}, \cite{Hess-Leg2016}), most of the experiments on the application of the Leggett-Garg test used explicit or implicit pre-processing of macroscopic experimental data (application of an empirical "measurement factor", filtering, post-selection).
For example,  Palacios-Laloy  et al.  \cite{Palacios}  wrote (we adhere to the original notations): "under macrorealistic assumptions,
the only effect of the bandwidth of the resonator would be to reduce the measured signal by its Lorentzian response function  $C(\omega)=1/\big[1+(2\omega/k)^2\big];$  we thus have to correct for this effect
by dividing the measured spectral density $\tilde{S}_z(\omega)$
 by $C(\omega)$. We then compute $K(\tau)$ by inverse
Fourier transform of $S_z(\omega)=\tilde{S}_z(\omega)/C(\omega)$."
Palacios-Laloy et al.  found that their
qubit violated a LG-tests, albeit with a single data point, and conclude that their
system does not admit a realistic, non-invasively-measurable description.

 In Waldherr   et al.  \cite{Neumann}   the data pre-processing procedure is explicitly described.
They reconstruct the conditional Rabi oscillation for
 a substitutional nitrogen  nuclear spin  using (almost) projective QND  measurements \cite{neumannS}.
The application of a narrowband, nuclear spin state-selective microwave  $\pi$
pulse flips the electron spin into the $|+1\rangle_e$ state or $|0\rangle_e$ and $|-1\rangle_e$ states
conditional on the state of the nuclear spin. However,  only if the
measurement outcome is $|+1\rangle_n$,  the Rabi oscillation is generated.
Since the fluorescence intensity differs significantly for the electron spin states $|+1\rangle_e$ (low) and $|0\rangle_e$ and $|-1\rangle_e$ (high), these target states can be distinguished in the histogram of many subsequent measurements using a maximum likelihood statistical procedure. Low
fluorescence level indicates that the  $\pi$ pulse was successful,
i.e., that the nuclear spin state is $|+1\rangle_n$.
Only in this  successful case a resonant radio-frequency pulse of certain length is applied
and a subsequent measurement is used for data analysis.

In our experiment (see Section \ref{exper}),
the successive weak measurements  generate a classical stochastic signal whose characteristics can be calculated using a positive operator-valued measure (POVM) measurement scheme.
   We theoretically show that the correlation of the classical output signal can be converted to the correlation of nuclear spin dynamics using a mapping dependent on the factor $\alpha$ characterizing the strength of the measurements.
A similar idea of applying inverse imaging to macroscopic data has been used in the field of quantum tomography
 (see e.g. Smithey et al. \cite{Smithey}).

We apply the theoretical mapping to the output classical process (or, equivalently, to the empirical correlation function) in order to recover the correlation function of the quantum object. Only then do we apply the L-G inequality to verify that the reconstructed process violates the classical properties of correlations. Thus, although the L-G test was not developed for invasive (albeit weak) measurements, our approach allowed us to apply it to verify the compliance of the model and the method of analysis with the experimental conditions.

Moreover, our approach makes it possible to introduce
analogue of the classical relative entropy of Kullback and
Leibler as a measure of the discrepancy of information that
occurs during the measurement process. This definition differs from the relative entropy commonly used in the non-commutative case.


\emph{The paper is organized as follows.}
A detailed description of the main experiment and  the results are given in Section \ref{exper}, while
    the theoretical analysis  is carried out in Section \ref{Analysis}.

The main tools of our analysis are the POVM measurements theory  and  the Campbell-Baker-Hausdorff (BCH) formula. We  introduce a modification of  closed form \ref{DCHGu} of the BCH formula (a simple proof is given in the Appendix \ref{BCH-A}) and show that it can be used in our  model. In view of the numerous references in our main text to various aspects of the POVM theory, we found it useful  to summarize for convenience the main facts of this theory in the Appendix \ref{POVMs}.


\section{Experiment and results}\label{exper}

\subsection{Experimental setup}

We use a single $\mathrm{^{13}C}$ nuclear spin in diamond as quantum system, probed by an NV center electron spin (see Fig. \ref{fig1}a).
The electron spin interacts weakly with the single $\mathrm{^{13}C}$ nuclear spin  through their hyperfine interaction.
Each NV
electron spin can selectively address nuclear spins in the near
vicinity under control of the dynamical decoupling (DD) sequence, and  is then read out using projective optical measurements (see SM, section I for details).


\subsection{ Model and main assumptions}

 Recall that the secular hyperfine vector $\mathbf{A}_z$ of the hyperfine tensor $\mathbf{A}(\mathbf{r})$, $\mathbf{r}=(r, \theta,\phi)$ with a properly chosen $x$ axis is given by
\begin{equation}\label{TensorHF}
  \mathbf{A}_z = \Big(A_{xz},  0, A_{zz}\Big)=\big(A_\bot, 0, A_\|\big).
\end{equation}
 Since the electronic spin
precesses at a much higher frequency than the nuclear spin, the
nuclear only feels the static component of the electronic field. Therefore, the dynamics of the unified system with single spin $^{13}C$ can be  described by the Hamiltonian in the secular approximation:
\begin{equation}\label{MainHam}
  H=\omega I_z + 2 \pi A_\|S_z\otimes I_z + 2\pi A_\bot S_z\otimes  I_x,
\end{equation}
where $\omega_L =\gamma_C B_z$, $\gamma_C$ is the $^{13}C$ gyromagnetic ratio, $B_z$ is the external magnetic field along the NV axis, $S_z, I_x,I_z$ are the electron and  nuclear spin-1/2 operators, $A_\|= A_{zz}$ and $A_\bot=\sqrt{A_{xz}^2+A_{yz}^2}$ are the parallel and transverse hyperfine coupling parameters. After the interaction controlled by the Hamiltonian and the DD sequence, the amplitude of the nuclear component $I_{x}$
is mapped to
optically detectable component $S_z$  of the NV center.

In our experiment, the interaction between the electron and nuclear spins is controlled by a Knill-dynamical decoupling (KDD-XYn) sequence of periodically spaced $\pi$ pulses, which contains $n$ units of $20$ pulses with a pulse interval $\tau$.
The measurement strength of weak measurements can be varied.
If $\tau$ is adjusted to the effective nuclear Larmor period $\omega_L$, that is,
(see \cite{TaminiauT})
\begin{equation}\label{tau1}
  \tau=\frac{(2k+1)\pi}{2\omega_L + 2\pi A_{zz}}, \qquad \text{if}\quad 2\pi A_{zx}\ll\gamma_n B
\end{equation}
or
\begin{equation}\label{tau2}
  \tau=\pi/\omega_L\qquad \text{if}\quad 2\pi A_{zz}\ll\omega_L,
\end{equation}
 we can assume that the system evolves under the effective Hamiltonian \cite{Ma}, \cite{BossDegen}, \cite{RBL}
   \begin{equation}\label{HAeff}
  H_{\mathrm{eff}}  = 2 \alpha S_z \otimes I_x,
\end{equation}
where the Larmor frequency  is determined by the static magnetic field (in our experiment $B \approx 0.25 \, \mathrm{T}$).    The effect of the KDD-XYn sequence is specified by a measurement strength parameter $\alpha =  \pi N_p  A_\perp \tau$, which depends on the transverse hyperfine   component $A_\perp$  and  the number of pulses $N_p =20 n$.

Nuclear-spin precession at the Larmor frequency around an external magnetic field $\mathbf{B}$ is
perturbed by the presence of the electron spin. As a result,   depending on the NV charge and spin states $NV^-$,  $m_S =\{-1,0, +1\}$ or $NV^0$,  $m_S = \{-1/2, +1/2\}$, the precession frequency and axis are modified.
The interaction Hamiltonian
 gives rise to electron-spin-dependent nuclear precession frequencies
$\omega_0$ (electron spin in $|0\rangle$) and $\omega_{\pm1} =
\sqrt{(\omega_0 \pm a_\|)^2 + a_\bot^2}$
(electron in $|m_S = \pm1\rangle$, $a_\|= 2 \pi A_\| $, $a_\bot = 2 \pi A_\bot $ ).
 The interaction controlled by the effective Hamiltonian $H_{\mathrm{eff}}$ carries the main information about the nuclear spin  through the transverse hyperfine component $A_\perp$, but introduces a dephasing  associated with the backaction on the nuclear spin.
 At the same time,
the secular \emph{longitudinal part} of the hyperfine interaction $A_\|S_z\otimes I_z$ turns out to be a source of significant information distortion.

\subsection{Influence of the longitudinal part. Diamond sample}

One of the possible mechanisms for involving the  longitudinal component $A_\|$ into the dynamics in our experiment is the process of optical readout \cite{Cujia}.  During optical readout
 of the NV centre with \emph{non-resonant} laser pulse, due to the shift $\bigtriangleup\omega = |\omega_0 - \omega_{\pm1}|$ of nuclear precession, the NV
centre cycles through its electronic states before reaching the $m_S = 0$ spin-polarized
steady state.
The way to overcome the influence of the surrounding nuclear  spins $^{13}C$ on the sample spin
   is to reduce the concentration of nuclear spins $^{13}C$.
To this end, our experiment is carried out at NV centers in \emph{isotopically purified diamond} (see details in SM section II).

Uncontrolled dynamics of electron spin states during and after the readout leads to random $z$-rotations of the nuclear spin under the action of the secular \emph{longitudinal part} of the hyperfine interaction $A_\|S_z\otimes I_z$.
Therefore,  due to a random phase accumulation, stochastic flips of the sensor spin  lead to the decoherence of target spin with the intrinsic transverse relaxation time
  $T^*_2$  or the intrinsic  nuclear dephasing rate
  $
  \Gamma_{intr}=(T^*_{2})^{-1}
$
  It is also believed that the longitudinal  hyperfine interaction during laser readout causes  an additional dephasing with rate \cite{Cujia}
  \begin{equation}\label{Readoh}
    \Gamma_{opt}\approx\frac{A^2_\|t_l^2}{2t_s},
 \end{equation}
where $t_l=t_{readout}$ is a period at which $m_s\neq0$ and $t_s$    is a sample time.

The influence of the \emph{longitudinal part} of the hyperfine interaction $A_\|S_z\otimes I_z$ associated with optical readout accumulates with an increase in the number of measurements. As a result, a short period of weakly perturbed behavior does not make it possible to reveal the purely quantum properties of the nuclear spin.
Therefore, for the purposes of our experiment, it is desirable to choose a pair of NV-nuclear spins for which the value of $A_\|$ is negligibly small.
The properties of the various electron-nuclear pairs and the process of choosing a suitable pair are described in more detail in SM.
In our diamond sample, we managed to find such a pair corresponding to "the magic angle cone", which made it possible to obtain a clear picture of the interaction.

\subsection{Detection protocol}

We now turn to the description of the results of the experiment with this particular magic pair, designated as NV2.
In our experiment, we investigated both prepolarized and nonpolarized initial conditions for the nuclear spin.
  While electron spin polarization is a relatively easy task, nuclear spin polarization is  a rather delicate procedure.
In the main text, we analyze and present the main results for the partially polarized case, when incomplete polarization occurs during the first measurements (see section \ref{initST}). The results for the case of a prepolarized target spin are presented in the SM for additional information.

When initially the nuclear spin is not polarized, it is represented by a completely mixed state
\begin{equation}
\label{eq1}
\rho_I = \frac{1}{2}|0\rangle \langle 0|+ \frac{1}{2}|1\rangle \langle 1| =\frac{1}{2} I^\alpha+\frac{1}{2}
I^\beta = \frac{1}{2} \mathds{1}= I_e,
\end{equation}
where $I^\alpha  = I_e  + I_z$, $I^\beta  = I_e - I_z$ and  $I_e = \frac{1}{2}\mathds{1}$,  $S_k = I_k  = 1/2 \, \sigma_k$, and $\sigma_k$ are Pauli matrices. In section \ref{initST}, we show that the  initial state \ref{eq1} during the first measurements is transformed into a partially polarized state $I_e  + \sin \alpha  I_x$, the degree of polarization of which depends on the parameter $\alpha$.

The measurement protocol is as follows.
The electron spin is optically pumped (see Fig. \ref{fig1}a,b) into the state $|0\rangle=|m_S  = 0\rangle$ or equivalently to $S_e  + S_z$, and then rotated by a $(\pi/2)_y$ pulse around the y-axis to state $S_e  + S_x$.
Then, the interaction controlled by the KDD-XYn sequence is applied, which ends with a $(\pi/2)_x$ pulse along the $x$-axis.
In between two successive measurements  the target spin undergoes free precession around the $z$-axis with the Larmor frequency $\omega_L$ (see Fig. \ref{fig1}c), during which the accumulation of information about the target object takes place.
 The final optical readout of the $S_z$ component of the NV sensor repolarizes the sensor back to the initial state $S_e + S_z$ while maintaining the nuclear spin state
in the $x-y$ plane, which reduces the amplitude of
the $I_y$ component by a factor of $\cos\alpha$.
Nuclear precession
 leads to a mixing of the $I_x$ and $I_y$ amplitudes.
Thus, in the process of measurement, the components $I_x$ and $I_y$ are subject to a backaction, which leads to an exponential decay of the spin amplitude, depending on the strength parameter $\alpha$.

\subsection{Data analysis}

The experiment is designed in such a way that the interaction Hamiltonian (the effective Hamiltonian proportional to $S_z\otimes I_x$) and density matrices are expressed in terms of  operators $S_i$ and $I_j$ of the basis \ref{Basil} so  the dynamics of the object can be calculated using a modification of the   Baker-Campbell-Hausdorff (BCH) formula \ref{DCHGu}.

An analysis of the transformation of the state of the composite system and, in particular, the dynamics of the state of the target nuclear spin during the interaction is given in Section \ref{bhnghXY}.
We single out the moment  when, as a result of the interaction of the NV-center as a meter with the spin  $^{13}C,$ the amplitude of the observable $I_x$ is mapped into the observable $S_y$ and next to $S_z$ after $\pi/2$ pulse along $S_x$ axis with the inevitable uncertainty introduced by the factor $\sin \alpha,$ which characterizes the strength of the interaction.
Together with the projective measurement, this process converts the quantum signal into the classical stochastic process of the measured signal.

In section \ref{bhnghXY} a recurrent equation for the dynamics of a composite system \ref{Lindblad} and  the recurrent formula \ref{vfnhbwfU} for  the state of the target nuclear spin are obtained.  Under natural assumptions, we get approximate formulas specifying the values of the amplitudes of the observables $I_x $ and $I_y$ corresponding to the nuclear spin at an arbitrary instant of time.
The expression for the amplitude $x_N$ of the $I_x$ component, taking into account incomplete polarization with an indefinite sign, can be represented as (see \ref{Dinamik})
 \begin{equation}\label{DinamikI}
 \begin{aligned}
    x_N^\pm\approx \pm\sin\alpha\cos (\omega Nt_f)\exp\Big\{-\frac{(N-1)\alpha^2}{4}\Big\}\\
   \approx \pm\sin \alpha \cos(\omega N t_f)e^{-N\Gamma_\bot t_s},
     \end{aligned}
  \end{equation}
where $t_s$ is the total measurement sequence time (see Fig. \ref{fig1}a) and $\Gamma_\bot=\alpha^2/4 t_s$ is
the measurement-induced dephasing rate.
This formula is then used (see section \ref{theory}) to theoretically calculate the correlation functions (see \ref{Czr})
\begin{equation}\label{eq4I}
  C^{I_x}(N)=\cos(\omega N t_f)\exp\Big\{-\frac{(N-1)\alpha^2}{4}\Big\},
\end{equation}
for the quantum process corresponding to the component $I_x$ and the classical output process of the component $S_z$
\begin{equation}\label{CZ}
\begin{aligned}
  C^{S_z}(N)=\sin^2\alpha\cos (\omega Nt_f)\exp\Big\{-\frac{(N-1)\alpha^2}{4}\Big\}\\=\sin^2\alpha \  C^{I_x}(N).
  \end{aligned}
\end{equation}
First, we note that formulas \ref{eq4I} and  \ref{CZ} differ by the  factor $\sin^2 \alpha$ in which one $\sin\alpha$  is generated by incomplete polarization (see section \ref{initST}), and the second $\sin\alpha$ is given by the measurement process itself, while  both are determined by the strength parameter $\alpha$ of the weak measurements.
Further, although the correlation function $ C^{I_x}(N)$ corresponds to the quantum process of the $I_x$-component and its formula includes a pure component $\cos (\omega Nt_f)$ corresponding to the Larmor precession, this pure component is distorted by the exponential decay factor.

The above analysis shows that the application of the L-G test to the correlation functions of the output process does not make sense, since this test is designed for non-invasive measurements of the observable in the pure state.
However, we can use the  theoretical analysis to reconstruct the correlation function of the Larmor precession from experimental data. To do this, we need to be able to restore the  parameter $\alpha$ with great accuracy. In the following subsections \ref{tst}, \ref{RcI} and in the SM, we describe the procedures for estimating the  parameter $\alpha$ in various experimental modes.

Thus, we come  to a simple algorithm for analyzing the experimental data: 1) calculate  the empirical correlation function of the classical output signal, 2) normalize the calculated empirical correlation function by $\sin^2 \alpha$ and  by exponential decay determined by the measurement-induced dephasing rate $\Gamma_\bot$, 3)  apply the Leggett-Garg inequality to the  corrected empirical correlation function.

\subsection{Test experiment with a classical signal}\label{tst}

\quad

 First we apply  these ideas to  design of a test experiment in which the NV center interacts not with the nuclear spin, but with an external  classical  sinusoidal magnetic field with a random phase.
The experiment consists of a series of measurements of a classical external oscillating (linearly polarized) magnetic field (see Fig. \ref{fig2}a,c)
\begin{equation}\label{KLM}
  B_{\mathrm{ac}}(t)=B_{ac}\sin(\omega_{\mathrm{ac}} t+\phi),
\end{equation}
 where $B_{\mathrm{ac}}$ is the amplitude of the magnetic field $\omega_{ac}=2\pi f_{ac}$ is the frequency and $\phi$ is a random phase, using  the NV center as a meter.
 Since the time interval between $M_i$ is not precisely controlled, it can be assumed that each run corresponds to a different realization of the random phase $\phi$ between the oscillating field and the measurement.

The analysis of the experiment with the classical random magnetic field \ref{KLM} basically follows the approach described above, with some natural modifications. In this experiment, the effective Hamiltonian is given by
\begin{equation}
\label{eq7}
H_{\mathrm{eff}}  = 2\alpha  S_z  \sin(\omega t +\phi).
\end{equation}
We can interpret this Hamiltonian as describing the interaction of the NV sensor with a random sinusoidal signal \ref{KLM},   where $\alpha=B_z N_p \tau/\pi$ determines the strength of the interaction.

Using our theoretical approach, we calculate that the output process is given by
 \begin{equation}\label{mraser}
  z(t)=\pm\sin\big(\alpha \sin(\omega t+\phi)\big)\approx \pm\alpha \sin(\omega t+\phi).
\end{equation}
where the approximate equality holds assuming small $\alpha$.
Therefore, the correlation function of the output process $z(t)$ has the following form:
\begin{equation}
\label{eq8}
\langle z(t)z(t+\tau)\rangle =\frac{\alpha^2}{2}  \cos(\omega \tau).	
\end{equation}
Recall that the autocorrelation function of the classical  random process
$
x(t)= \sin(\omega t + \phi)
$
   with a random phase $\phi$ uniformly distributed on the interval $[0,2\pi]$ is given by
\begin{equation}\label{rjhbcb}
    C(\tau) = \frac{1}{2} \cos(\omega \tau).
\end{equation}
Comparing expressions \ref{rjhbcb} and \ref{eq8}, we find that they differ only in the factor  $\alpha^2$, which we must extract from a separate experiment.

The result of measurements in our experiment is a
sequence $\{n_k\}$ of the number of photons recorded during each readout. Thus, we need to find out how the
probabilistic properties of counting statistics $\{n_k\}$ are
related to the probabilistic characteristics of the sensor
signal. We calibrate the fluorescence output $n_a$ and $n_b$ of the NV center for electron spin in the bright $m_s  = 0$ and dark $m_s  = -1$ state (see Supplementary material,  section "SSR method").

To estimate parameter  $\alpha$  a phase modulation
$$
\phi_k  =\pi/2  \sin(2\pi k/8)
$$
is additionally applied  to the final $(\pi/2)$ pulse which modulates the output signal (see Fig. \ref{fig2}a). The empirical correlation function of the output signal has the form (Fig. \ref{fig2}b), which is the result of the superposition of the external classical field and the final modulation impulse $ \pi/2.$  We fit the curve with least square method using an analytical expression:
\begin{widetext}
\begin{equation}
\label{eq9}
\min_{n_a, n_b, \alpha} \sum_{k=1}^{200} \left( \left( \langle n_i n_{i+k}\rangle_i-n_{av}^2\right)-\frac{(n_a-n_b )^2}{4} \langle S_z^i (\alpha,\Phi_s ) S_z^{i+k} (\alpha,\Phi_s )\rangle_i \right)^2.
\end{equation}
\end{widetext}
Here $n_{av}=(n_a+n_b)/2$ and $S_z^i(\alpha ,\Phi_s)$
is the amplitude  of the component $S_z$ of the density matrix of the NV spin.  As shown in the SM (section "SSR method"), it  takes the form
\begin{equation}
S_z^i(\alpha ,\Phi_s) = \sin\left(\frac{\pi}{2}\sin\left(\frac{2 \pi k}{8}\right) + \alpha\cos\left(\frac{k \Phi_s \pi}{4}\right)\right),
\end{equation}
where  $\Phi$ (see SU, Eq. 5) is the phase acquired under the influence of the DD sequence.
As a result, we extract both the fluorescent responses of the NV center in $m_s  = 0$ ($n_a$), $m_s  = -1$ ($n_b$) and the local strength of the RF field (for details see SM section VI).
Finally, performing an experimental series without phase modulation, 
we reconstruct the normalized correlation function of the classical signal (see Fig. \ref{fig2}d), using
\begin{equation}
\langle z_iz_{i+k}\rangle_{emp}  = 4\frac{\langle n_in_{i+k}\rangle-n_{av}^2}{(n_a-n_b )^2}
\end{equation}
and Eq. \ref{eq8}.
Note that the restored correlation function of the classical signal is equal to the analytically calculated function.

\subsection{Calibration of electronic and nuclear spin parameters}\label{RcI}

To calibrate the fluorescence output $n_a$ and $n_b$ of the NV center for electron spin in the bright $m_s  = 0$ and dark $m_s  = -1$ state, we apply a phase modulation to the final $(\pi/2)$ pulse (see Fig. \ref{fig3}a). In a series averaged output, we obtain an oscillating signal, which is fitted with
\begin{equation}\label{2.17}
n(k)=\frac{1}{2}(n_a+n_b)+\frac{1}{2}(n_a-n_b)\sin^2\left(\phi_k/2 + \phi_0\right),
\end{equation}
where $\phi_k$ are the modulation angles in series $0,30,60,90...360$, $n_a$ and $n_b$ are bright and dark photon count rates (see Fig. \ref{fig3}b). The $\phi_0 \ll 1$ is the phase offset due to imperfections of the pulses due to detuning.  Measurement scheme Fig. \ref{fig3}\textbf{a} operates at $\phi_k = 90$. Each angle series consists of 50 sequential measurement, and $\phi_k = 90$ is measured 500 times (see Fig. \ref{fig3}a, number of measurements in  a inner brackets).

Then we calculate the empirical correlation function for the registered photons using the formula
$$
C_n (N)=\langle n_i n_{i+N}\rangle =\frac{1}{k-N} \sum_i^k n_i n_{i+N}
$$
 and the empirical electron spin correlation
\begin{equation}
C_{S_z} (N)= \frac{4(C_n (N)-n_{av}^2)}{(n_a-n_b)^2}
\end{equation}
using the estimated parameters  $n_a$ and $n_b$ from Fig \ref{fig3}b. We evaluate the parameters $\alpha$, using the standard least squares method $(\phi=\omega t_f )$ by comparing, the empirically estimated $C_{S_z} (N)$ to the analytical one (eq. \ref{CZ}).
In this way we find an estimate of $\alpha$ with which Eq. \ref{CZ} approximates the reconstructed correlation function of the output signal in an optimal way (see Fig. \ref{fig3}c). After carefully measuring the calibration constant
 $\alpha$ we normalized the empirical correlation
 function $C_{S_z}(N)$ by $\sin^2 \alpha$
 and get an estimation of $C_{I_x} (N)$ (see Fig. \ref{fig3}d).

\subsection{Main results}

We use  the Leggett-Garg test   in the following  form (see, for example, \cite{Emery})
\begin{equation}
\label{eq10}
LG(\tau)=2C(\tau)-C(2\tau).
\end{equation}

Let us first consider the case of a classical input sinusoidal signal with a random phase. On Fig. \ref{corr_f}\textbf{a} it can be seen that even after correcting the empirical correlation function by the factor $\alpha^2$,
as expected, there is no violation of the L-G inequality.

In the case of measurement of  the nuclear spin NV2, we first consider the case when empirical correlation function is corrected only on the factor $\sin^2\alpha$ (see Fig. \ref{corr_f}\textbf{b}). In this case the inequality is violated   until the damping effect manifests itself, that is,  only at several initial points, the number of which depends on the parameter $\alpha$.
The demonstrated result corresponds to the following experimental parameters:  KDD-XY5  sequence, free precession angle  $\approx 27^\circ$,   polarized $^{14}N$ and $\alpha\approx 0.18 \pi$. In this case  decoherence is fast and we recognize only three point of disturbance and only in the first period of oscillation.

The Fig \ref{corr_f}\textbf{c} shows the case where the empirical correlation function is corrected by an exponential decay factor. It is noteworthy that with such a correction, the correlation function demonstrates a violation of inequality in the second and even third fluctuations.
In total, we conducted 5 experiments . We used a simplified version of
memory enhanced readout proposed in \cite{neumannS}, where the NV electron spin is mapped to $^{14}N$.  We performed 200 repetitive readouts per single measurement and calculated the resulting number of photons instead of using the maximum likelihood method (see SM section V).
The  Fig. \ref{corr_f}\textbf{d} shows the result of the result of averaging  over five experiments with different control sequences. 
Again, the correlation function demonstrates a reliable violation of the inequality in the third oscillation.


Finally, we discuss the effects of ionisation on the
recorded data. It was found  that the
NV centre is in the dark state during green excitation
for $\approx 30$\% of the time without an observable fluorescence
fingerprint.  Ionization of the NV center has been identified as the limiting decoherence mechanism for quantum memories used for long-range quantum communication optical networks. However, as we emphasized above, the values of the signal in the initial period, when the backaction does not yet distort the signal, are fundamentally important for our experiment. Therefore, we select control actions that are aimed not at achieving the longest oscillation duration, but at the least signal distortion in the initial period.
We simulate the process of sequential weak measurements with initially polarized target spin using Monte-Carlo method
and  conclude that the errors induced by charge state are less than statistical errors and  do not affect our final conclusions.
A more detailed ionization mechanism and the results of numerical simulations are given in the Appendix \ref{charge}.

\emph{In conclusion}, we  showed that the empirical correlation functions corrected on the basis of our theoretical model do break the LG-inequality in different measurement regimes (see also Supplementary Information). We also showed that in a test experiment with an input random classical signal, the empirical correlation function corrected in accordance with the theory does not violate the inequality.
These results  allows us to conclude that, firstly, our model and theoretical analysis describe the experiment quite well, and secondly, that the strength parameter $\alpha$ of weak measurements is estimated experimentally with high accuracy.
This means that we correctly reconstruct the correlation of the quantum Larmor precession, which, of course, violates the LG inequality according to the theory.
The above results show that although we cannot avoid the inevitable decoherence effect during the measurement process, we can account for these changes based on theoretical analysis and accurate empirical reconstruction of the experimental parameters. 

We also note that in the process of analysis, we restore the sequence of transformations of the initial state from a purely quantum state to a classical macroscopic state at the output of the system (see section \ref{Analysis}).

\qquad

\section{Theoretical analysis of experiment}\label{Analysis}

\subsection{Notation and Preliminaries}\label{Notations}

The process of repeated (weak) measurements of some quantum observable is implemented on the composite Hilbert space $\mathcal{H}=\mathcal{H}_D\otimes \mathcal{H}_Q$ by coupling  the primary quantum system $Q$, initially in the $\rho$ state, on the Hilbert space $\mathcal{H}_Q$, to a quantum measuring device $D$, initially in the state $\sigma$.
 The two systems interact during a period $\tau$ and
after interaction, the initial density matrix  is transformed into
\begin{equation}\label{KRStraS}
\rho_{QD}(\tau)=U(\tau)\rho\otimes\sigma U^*(\tau),
\end{equation}
where $U$  is a unitary operator acting on the composite system.
The unitary group $U(\mathcal{H})$ consists of complex linear operators $U$ on $\mathcal{H},$
which satisfy $UU^* = \mathds{1}$ and, accordingly, the Lie algebra $\mathfrak{u}(\mathcal{H})$ of this group
   consists of anti-Hermitian operators.

The  manifold of general quantum states is  the family of orbits of the smooth action of the group $GL(\mathcal{H})$ of invertible operators on $\mathcal{H}$  acting on the space of self-adjoint operators according
to the map \cite{Grabowski}
\begin{equation}\label{Rank}
u^*(\mathcal{H})\ni A=A^* \rightarrow gAg^*, \quad g\in GL(\mathcal{H}).
\end{equation}
In this picture,  pure states form  an orbit in the dual space $\mathfrak{u}^*(\mathcal{H})$.
This action does not preserve the
spectrum and the trace of $A$ unless $g$ is unitary, however, it preserves the positivity of $A$ and \emph{the rank of} $A$.

In the case  of a composite $2\times 2$ system (in particular, two spin system $S=1/2$ and $I=1/2$), the basis of the state space can be written as the direct product of the basis sets of the single spins
\begin{equation}\label{Basil}
\{S_e, S_x, S_y,S_z\}\otimes\{I_e, I_x,I_y, I_z\},
\end{equation}
where $S_e=I_e=\frac{1}{2}\mathds{1},$  $I_i=\frac{1}{2}\sigma_i$.
In this case the local transformations of density matrices form a six-dimensional subgroup $SU(2)\otimes SU(2)$ of the full unitary group $U(4).$
For an isolated system with dynamical symmetry group  $SU(2)$ there exists
the corresponding (real) Lie algebra $\mathfrak{su}(2)$,  spanned by the operators  $\{\ic I_x, \ic I_y, \ic I_z\}$ satisfying
the angular momentum commutation relations
\begin{equation}\label{Crel}
 [I_x,I_y]=\ic I_z
\end{equation}
and cyclic permutations, where $I_i=\frac{1}{2}\sigma_i$ and $\sigma_i$, $i=x,y,z$ are Pauli matrices.
So  the pure state  can be expressed in terms of the basis of observables and identified with a point on the surface of the Bloch ball as the Bloch vector
\begin{equation}\label{B-vector}
\begin{aligned}
  \rho=I_e+x I_x +y I_y +z I_z:=\mathbf{w}\cdot\mathbf{I},\\
    x=\trace[\rho \ \sigma_x], \ y=\trace[\rho \ \sigma_y],  \ z=\trace[\rho \ \sigma_z]\\
  \mathbf{w}=(x,y,z), \quad  \trace[\rho]=\trace[\rho^2]=1.
  \end{aligned}
\end{equation}
The mixed states correspond to the points inside Bloch ball $|\mathbf{w}|= \trace[\rho^2]<1$. Unitary operations can be interpreted as rotations of the Bloch ball while the dissipative processes as linear or affine contractions of this ball.

It is generally considered that a quantum-mechanical system which
is isolated from the external world, has a Hamiltonian evolution. If $\mathcal{H}$ is
the Hilbert space of the system, this is expressed by the existence of a self-
adjoint (Hamiltonian) operator $H$, such that the state $\rho,$ at time $t$
is computed from the state at time zero according to the law
$
\rho_t=e^{-\ic t H}\rho e^{\ic t H}.
$
A \emph{composite} system represented in terms of Lie group  can be considered as an isolated  from the  environment.

To analyze the impact of a unitary group on the state of the system we can use the Campbell-Baker-Hausdorff (BCH) formula.
The  BCH-formula
reveals  the formal purely algebraic  connection between the local structure of a Lie group $G$ and its algebra $\mathfrak{g}$.
If one makes no further simplifying assumptions, then
the BCH formula for the orbit $e^H\rho e^H$ expands to an infinite series of nested commutators.
 But, in a particular case, under the condition $$\big[H,[H,\rho]\big] =\rho,$$
  (for example,   $ \big[S_i, [S_i,S_j]\big]=S_i$ if $S_i\neq S_j,$)
the exact formula holds
 \begin{equation}\label{AprBC}
  e^{-\ic \phi H} \rho e^{\ic \phi H} =\rho\cos \phi-\ic [H,\rho] \sin \phi.
 \end{equation}
This means that if the experiment is constructed so that the interaction Hamiltonian is expressed in terms of some basis operator $S_i$, then the dynamics of the object can be calculated using  \ref{AprBC}.
 However, the formula \ref{AprBC} does not work in the case when, as a result of interactions, the system passes into an entangled state, which is necessary to obtain information about the measurement object.

  A state $\rho_{QD}$ of a composite quantum system is called \emph{entangled} if it cannot be represented as a convex combination 
  \begin{equation}
 \rho_{QD}=\sum_\alpha p_\alpha\rho^1_\alpha\otimes \rho^2_\alpha, \quad \text{with}\quad\sum_\alpha p_\alpha=1,
 \end{equation}
 where  $\rho^1_\alpha,$ $\rho^2_\alpha$ are density matrices of the two subsystems.
Recall  that a  bipartite $\mathrm{pure}$ state $\rho_{QD}$ \emph{is entangled} if and only if its reduced states are \emph{mixed states}.
(Moreover,  a  pure state of composite system is entangled if and only if it violates Bell's inequality \cite{Gisin}, however, the assumption that violation of some Bell inequality is equivalent to the concept of entanglement is incorrect \cite{Horodecki}), \cite{Werner}).

 In our case, \emph{entanglement} manifests itself in the appearance of zero commutators in the BCH-expansion  of the composite system, such as \ref{vanish} in the \ref{bhnghXY} section.

The local properties \emph{of mixed states} of two subsystems of an entangled composite system can be studied by analyzing the \emph{action} of the elements of the local subgroup $\mathrm{Loc}=SU(2)\otimes SU(2)$ of the group $U(4)$ on state $\rho$ of a composite system lying in orbit
\begin{equation}\label{Orb}
\{\rho'=U\rho U^*, \ U\in SU(2)\otimes SU(2)\}.
\end{equation}
For this purpose, the  real symmetric $6 \times 6$ Gram matrix is introduced
 \begin{equation}\label{Gram1}
\begin{aligned}
  G_{ij}:=\frac{1}{2}\trace\big(W_i W_j\big), \quad W_j:=[R_j,\rho], \quad j=1,\ldots, 6,\\
  R_k=\ic \sigma_k \otimes \mathds{1}_2, \qquad R_{k+3}=\mathds{1}_2\otimes \ic \sigma_k, \quad k=1,2,3,
  \end{aligned}
\end{equation}
where the anti-Hermitian matrices $R_i,$ $i = 1, \ldots, 6,$ form a basis of the $\mathfrak{su}(2)\oplus\mathfrak{su}(2)$ Lie algebra.
It turns out \cite{Kus}, that the \emph{rank} of the Gram matrix is a \emph{geometric invariant} (cf. the mapping \ref{Rank}) of the orthogonal transformations, which does not change along the  orbit \ref{Orb}.
Since a state $\rho_{QD}$ of the composite system is expressed in terms of Pauli matrices, the commutators $[R_j,\rho]$ can also be represented in  terms of $[\sigma_k\otimes \mathds{1},  \sigma_j\otimes \mathds{1}]$.
Therefore, if the interaction Hamiltonian $H$ applied to a composite system in a separable state $\rho_{QD}$ generates zero terms (see \ref{vanish})
$$
 [\sigma_k\otimes \mathds{1},  \sigma_j\otimes \mathds{1}]=0
$$
 in the BCH-formula, it can serve as  \emph{evidence of the transition} of a separable state to an entangled one.
 The degeneracy of the Gram matrix and, as a consequence, the appearance of zero terms violate the conditions for the applicability of formula \ref{AprBC} .

Despite this, in our simple model, we can derive a modified exact BCH formula (see Appendix \ref{BCH-A}).
If the conditions
\begin{gather}
  B= [H,\rho], \qquad [H,B] =k\rho -k\Delta\label{ecjdbz1}\\
  [H,\Delta]=0 \qquad \big[H,[H,B]+k\Delta\big]=\big[H,[H,B]\big].\label{ecjdbz2}
\end{gather}
 are met
the following modification of the BCH formula holds:
 \begin{equation}\label{DCHGu}
 \begin{aligned}
  U\rho U^*=e^{-\ic H\phi}\rho e^{\ic H\phi}=\rho\cos (\phi\sqrt{k})+\Delta \big(1-\cos(\phi\sqrt{k})\big)\\-\frac{1}{\sqrt{k}} \ic[H,\rho] \sin(\phi\sqrt{k}).
  \end{aligned}
 \end{equation}




POV measures  naturally arise in the process of repeated measurements of a quantum observable.
The POVM method  is a realization of Naimark's theorem (see e.g.  \cite{Davis}, \cite{Holevo},\cite{Kraus}), which
states roughly that POVM scheme is equivalent to projective measurements in an extended Hilbert space (the von Neuman-L\"{u}der projection measurements). The discrepancy between the original and extended Hilbert spaces is interpreted as the presence of perturbations or inaccuracies in measurements (see Appendix \ref{POVMs}).

\emph{Completely positive, trace-preserving  maps} arise  in the POVM measurement scheme, when one wishes to restrict attention to a  subsystem $U(\mathcal{H}_Q)$
of a larger system $U(\mathcal{H}_D)\otimes U(\mathcal{H}_Q)$.  The post-measurement  state $\rho_Q$ of the \emph{primary system} $Q$
 is obtained by
projecting the joint  state $\rho_{QD}$ of the entangled system $QD$ into the subspace of quantum subsystem  by  taking \emph{partial trace} with respect to the ancilla.

The basic characterization of the measurement model is given by the quantum operation  $S_\alpha$ (see \ref{Strans} in Appendix \ref{POVMs}),  which is the linear transformation of the initial state corresponding to a projection measurement given by an orthogonal projector $P_\alpha$. The post-measurement  state of the \emph{primary system}
 is obtained by taking partial trace \ref{tracing} with respect to quantum measuring device $D$.
Thus the map $S_\alpha$ must at least be both \emph{trace-preserving} and \emph{positive-preserving} in order to preserve the density matrix property.
However, the latter is not sufficient, since $S_\alpha$ must be  the result of a \emph{positivity-preserving
process} on the larger system  $U(\mathcal{H}_D)\otimes U(\mathcal{H}_Q)$ of operators,
which is essentially an informally definition of \emph{the complete positivity} of $S_\alpha$.
Every completely positive map $S_\alpha$ can be represented (non-uniquely) in the Kraus form \cite{Kraus},
\begin{equation}\label{Kraus0}
S_\alpha(\rho)=\sum_k (M^\alpha_k)^*\rho M_k^\alpha,
\end{equation}
with some set of operators $M^\alpha_k$.
The probability to obtain result $\alpha$ in the measurement is given by (see \ref{NewmMes} for details)
\begin{equation}\label{BackProb}
\mathds{P}_{\rho\otimes\sigma}(\alpha)=\trace[\rho \sum_k(M_k^\alpha)^*M_k^\alpha]:=\trace[\rho F_\alpha],
\end{equation}
where (see \ref{Kras-Effec})
\begin{equation}\label{R-observ}
 F_\alpha= \sum_k(M_k^\alpha)^*M_k^\alpha=\trace_D  [U^*P_\alpha U\sigma].
\end{equation}
Therefore, we may identify a set of Kraus operators $\{M_k^\alpha\}$ or, equivalently, a set of effects $\{F_\alpha\}$
with a generalized observable defined by a \emph{positive operator-valued measure}
$$
R(E)=\sum_{\alpha\in E} F_\alpha, \quad E\subset \mathbb{Z}.
$$

The equation  \ref{R-observ} demonstrates that a physical quantity $F_\alpha$ of a physical system  is actually identified by the real experimental equipment used to measure it.
\emph{Thus, quantum observables, defined by $\{F_\alpha\}$ and measured relative to a
reference frame (ancillas), can be considered as relative  attributes. }

 The  representation \ref{BackProb}, which is the result of a inverse mapping of the detector output to the target system,  allows us to introduce an analogue of the classical \emph{relative entropy} of Kullback and Leibler as a measure of the discrepancy of information that occurs during the measurement process
(see definition \ref{InfEntr} and discussion in Appendix \ref{POVMs},  and application in \ref{theory}).

A sequence of (weak) POVM measurements given by a completely positive stochastic maps $S(\rho)$ generates a set of orbits inside the Bloch ball. Each measurement reduces the parameters of orbits
and sequence of measurements produce an inward-spiralling precession, which sequentially traverses the orbits.

The the stratification of the Bloch ball by the orbits can be considered as a \emph{natural quantization} generated by POVM measurements.

 \subsection{ Interaction  in   the $x$-$y$ plane }
\label{bhnghXY}

   The simple basic idea of the experiment is to study the precession of a  spin-$\frac{1}{2}$ particle  under the action of the Hamiltonian $H=-\omega I_z$  with
 some  angular precession frequency $\omega$.  The nuclear spin undergoes a free precession around
the $z$ axis with an angular velocity given by the Larmor frequency $\omega$. We expect the Leggett-Garg inequality, which limits the strength of temporal correlations in the classical structure, to be violated by the quantum mechanical unitary dynamics of the Larmor precession.
  However, in a real experiment, a challenging work is to implement and analyze the measurement procedure, which is carried out using another quantum object as a meter.
In this section, we will pass through this procedure step by step.

 \textbf{Initial condition}.
  We start the analysis of the experiment with the initial condition of the composite system corresponding to the idealized polarized state of the target nucleus spin
   \begin{equation}\label{AinitiaC}
  \rho (0)=\rho^S(0)\otimes\rho^I(0)=(S_e+S_z)\otimes(I_e + I_x).
\end{equation}
 In our experiment, we investigated both polarized and unpolarized initial conditions of the nuclear spin.
  As shown in the \ref{initST} section, an experiment with an unpolarized initial state results in partial polarization during measurements. Note also that pre-polarization procedures (see \cite{Taminiau}, \cite{Cujia}) are never 100\% efficient. How to take into account the influence of incomplete polarization on the final result is discussed in Section \ref{initST} below. We consider an experiment with a polarized initial condition as a basic idealized measurement model.

\textbf{ Step 1, sub-step 1}.
   We consider the transformation of a composite system, which is in the initial state \ref{AinitiaC}, under the influence of the Hamiltonian $H = -\omega I_z$ and apply the BCH formula to calculate the result of this transformation
   \begin{equation}\label{cmdfw}
    \rho_1(1)=(S_e+S_x)\otimes \big(I_e+I_x \cos(\omega t_f)+I_y \sin(\omega t_f)\big).
  \end{equation}
  Information about the quantum Larmor precession itself arises as amplitudes $x=\cos(\omega t_f)$ and $y=\sin(\omega t_f)$ of the observables $I_x$ and $I_y$. Since entanglement does not occur under this action, the state of the electron spin does not change.

  \textbf{Step 1, sub-step 2}.
 Next, we study the interaction
between the NV sensor and the target nuclear spin in the $x$-$y$ plane,
which under the action of the KDD sequence  is determined by the effective Hamiltonian \cite{Ma}, \cite{BossDegen}, \cite{RBL}
\begin{equation}\label{HTrth}
 H_{eff}=2\alpha S_z\otimes I_x.
\end{equation}
 The  influence  of KDD sequence  is specified by the measurement strength parameter $\alpha,$  which  depends on the perpendicular component $A_\bot$ of hyperfine field  and can be controlled by the number and  duty cycle of pulses (see section \ref{exper}). To study evolution, we apply the propagator $\exp\{-2\alpha S_z\otimes I_z\}$ to the density matrix \ref{cmdfw}.
Before applying the BCH formula, we  calculate the commutator $[H,\rho]$:
\begin{gather*}
  [H,\rho]=[2\alpha S_z\otimes I_x, (S_e+S_x)\otimes(I_e+x I_x+y I_y)]= \\
 [2\alpha S_z\otimes I_x, S_e\otimes(I_e+x I_x)] +[2\alpha S_z\otimes I_x,  S_e\otimes y I_y]\\
 [2\alpha S_z\otimes I_x, S_x\otimes I_e+x I_x] + [2\alpha S_z\otimes I_x, S_x\otimes I_e+ x I_x].
\end{gather*}
Notice, that
\begin{gather}
 [S_z\otimes I_x,  S_x\otimes I_y]=0, \label{vanish}\\
\big[S_z\otimes I_x, S_e\otimes(I_e+x I_x)\big] =[S_z,S_e]\otimes \frac{1}{2}(I_x+xI_e)=0, \label{vanish1}
 \end{gather}
 which indicates the  transition of a separable state \ref{cmdfw} into an entangled state  as a result of interaction (see Section \ref{Notations}).
Hence,
\begin{equation}\label{ghzB}
\begin{aligned}
 B:= [H,\rho]= [2\alpha S_z\otimes I_x,  S_e\otimes y I_y]\\+[2\alpha S_z\otimes I_x, S_x\otimes I_e+ x I_x]
 \\=\ic\frac{1}{2}S_y\otimes(xI_e+I_x)+\frac{\ic}{2 } S_z\otimes y I_z,
 \end{aligned}
\end{equation}
and
\begin{gather}
  [H,B]=k\Delta+k\rho, \quad \sqrt{k}=1/2, \label{jadjo}\\
  \Delta= (S_x\otimes yI_y)+S_e\otimes(I_e+x  I_x). \label{lttha}
\end{gather}
Thus,  applying the BCH formula \ref{DCHGu}
we get
 \begin{gather}
 \rho_2(1)=(\cos \alpha)\rho^S\otimes\rho^I\\ -\frac{\ic}{1/2}\sin\alpha \Big[ \frac{\ic}{2}S_y\otimes(xI_e+I_x) + \frac{\ic}{2 } S_z\otimes y I_z \Big]\\=(\sin \alpha) S_y\otimes(xI_e+I_x)\label{jo1a}\\
 +S_e \otimes(I_e+xI_x)+  (\cos\alpha) S_e \otimes yI_y\label{jo2a} \\+(\sin\alpha) S_z\otimes y I_z
  \label{jo3a}\\+(\cos\alpha) S_x\otimes(I_e+ x I_x )
+S_x\otimes y I_y,  \label{jo4a}
 \end{gather}
 which already clearly shows that entanglement has occurred.
 We note, that a similar expression   is given in \cite{Cujia}, however, it remained unclear for us how it was obtained.

This result immediately leads to \emph{a post-interaction  estimation} of the state of the nuclear spin. Following the POVM analysis rules (see  Appendix \ref{POVMs}),  we  estimate the new state $\rho^I_2(1)$ of the nuclear spin by taking the \emph{partial traces} over the NV sensor and the state of NV spin $\rho^S_2(1)$, respectively:
\begin{gather}
  \rho^I_2(1):=\trace_S [\rho_2(1)]=I_e+ I_x\cos(\omega t_f) + I_y \sin(\omega t_f)\cos \alpha, \label{postcnbv}\\
   \rho^S_2(1):=\trace_I [\rho_2(1)]=S_e+S_x\cos\alpha +S_y \cos(\omega t_f)\sin \alpha.\label{postcnbvS}
\end{gather}
 Note also that as a result of the interaction, the coordinate $x$ of the observable $I_x$ is mapped into the coordinate $\theta= \cos(\omega t_f)\sin \alpha$ of the observable $S_y$ of the NV sensor with the factor $\sin \alpha$, which specifies the magnitude of the  measurement and, as a consequence, the \emph{incompleteness} of the information obtained during the weak measurement.

\textbf{Step 1, sub-step 3}.
The rotation of the electronic spin is performed, by applying $\pi/2$ pulse along $S_x$
 \begin{gather}
 \rho_3(1)=(\sin \alpha) S_z\otimes(xI_e+I_x)\label{jo1}\\
 +S_e \otimes(I_e+xI_x)+  (\cos\alpha) S_e \otimes yI_y\label{jo2} \\-(\sin\alpha) S_y\otimes y I_z
  \label{jo3}\\+(\cos\alpha) S_x\otimes(I_e+xI_x)
+S_x\otimes yI_y.  \label{jo4}
 \end{gather}
Again, by taking the \emph{partial traces} over the NV sensor and nuclear spin we obtain
\begin{gather}
  \rho^I_3(1):=\trace_S [\rho_3(1)]=I_e+ I_x\cos(\omega t_f) + I_y \sin(\omega t_f)\cos \alpha, \label{TransI}\\
  \rho^S_3(1):=\trace_I[\rho_3(1)]=S_e+ S_x \cos \alpha +S_z \cos(\omega t_f)\sin \alpha. \label{TransS}
\end{gather}
Thus, the rotation of the electron spin leads to a mapping of the $S_y$ coordinate of the NV sensor and, consequently, the $x$-amplitude of the nuclear component $I_x$
onto the amplitude of the optically readable $S_z$ component of the NV sensor given by
$$
\zeta_1:=\trace [\sigma_z \rho^S_3(1)]=x\sin \alpha=\cos(\omega t_f)\sin \alpha.
$$

\textbf{Step 1, sub-step 4}.
The final projective measurement along the $z$ axis is the \textbf{optical readout} of the observable $S_z$
with eigenvalues and corresponding projection operators given by the formula
\begin{equation}\label{SZobs}
\begin{aligned}
 \lambda_+=\frac{1}{2}, \quad\lambda_-=-\frac{1}{2} \\
 S^\alpha=S_e+S_z,   \quad  S^\beta=S_e-S_z
  \end{aligned}
\end{equation}
These measurements are defined by the probabilities
\begin{equation}\label{Probz1}
\mathds{P}(\lambda_+)=\trace[\rho^S_3(1)S^\alpha] \quad \text{and}\quad \mathds{P}(\lambda_-)=\trace[\rho^S_3(1)S^\beta].
\end{equation}
\begin{remark}\label{problem}
Canonical projective measurement is usually viewed as an instantaneous act or spontaneous collapse of the probability amplitude. However,  the optical projective measurement is a quantum stochastic (non-unitary) process, which has been intensively studied both from a physical and mathematical point of view (see e.g. \cite{Optik}, \cite{Davis}).
We postpone the discussion of this problem to Section \ref{OPROUT} and Appendix \ref{double}.
\end{remark}

Optical readout re-polarizes the sensor back on to the initial state $\rho^S(0)=(S_e+S_z)$, while leaving the nuclear spin in the $x-y$ plane, so the next cycle of measurements we start with the next free precession  applied to the state
$$
\rho^S(0)\otimes\rho^I_3(1)= (S_e+S_z)\otimes (I_e+ I_x\cos(\omega t_f) + I_y \sin(\omega t_f)\cos \alpha).
$$
By repeating the above measurement process, we obtain  after $N$ measurements
\begin{gather}
  \rho^I(N)=I_e+  x_N I_x + y_N I_y,  \label{roI} \\
\rho^S(N)=S_e+\xi_N S_x+\theta_N S_y +\zeta_N S_z. \label{roS}
  \end{gather}
  where coordinates $x_N$ and $y_N$ are given by the   recurrent equations
\begin{gather}
 x_N= x_{N-1}\cos (\omega t_f)-y_{N-1}\sin(\omega t_f) \label{vyjuvth1}\\
 y_N= x_{N-1}\sin (\omega t_f)\cos\alpha+y_{N-1}\cos(\omega t_f)\cos\alpha, \label{vyjuvth2}
  \end{gather}
 with    $x_0=1$ and $y_0=0$. The   recorded output amplitude $\zeta_N$ is given by
 \begin{equation}\label{Trans}
 \zeta_N=\trace [\sigma_z \rho^S(N)]=x_N\sin \alpha.
\end{equation}
In terms of Bloch vectors $\vec{\rho}=(x,y,z)$, the system of equations \ref{vyjuvth1},  \ref{vyjuvth2}  can be written as follows:
 \begin{equation}\label{vfnhbwfU}
 \begin{aligned}
\vec{\rho}_I(N)=\mathcal{R}_z^\alpha\vec{\rho}_I(N-1).
   \end{aligned}
\end{equation}
where the operator $\mathcal{R}_z^\alpha$ is given by
\begin{equation}\label{Ro2a}
  \begin{bmatrix}
  \cos(\omega t_f) & -\sin(\omega t_f)& 0\\
   \sin(\omega t_f)\cos \alpha  & \cos(\omega t_f)\cos \alpha& 0\\
   0 & 0& 1\\
   \end{bmatrix}.
\end{equation}

For a sufficiently small $\alpha,$ by the same reasoning as in  \cite{Cujia}, one can obtain from \ref{vyjuvth1}, \ref{vyjuvth2} the following approximate representation of the dissipative process:
\begin{gather}
   x_N\approx \cos (\omega Nt_f)\Big(\cos\big(\alpha/2\big)\Big)^{2(N-1)}\\\approx \cos (\omega Nt_f)\exp\Big\{-\frac{(N-1)\alpha^2}{4}\Big\}, \label{htrurc}\\
  y_N\approx \sin (\omega Nt_f)\cos \alpha\Big(\cos\big(\alpha/2\big)\Big)^{2(N-1)}\\\approx \sin (\omega Nt_f)\cos\alpha\exp\Big\{-\frac{(N-1)\alpha^2}{4}\Big\}.\label{htrurc1}
\end{gather}
   This representation corresponds to the well-known form of solutions of the classical Bloch equations for the transverse components $M_x,$ $M_y$.

   Finally we note that with
   $$
   \Delta= \rho - 4\big[H,\big[H,\rho]\big] \quad\text{and }\quad \sqrt{k}=1/2,
   $$
    the closed form \ref{DCHGu} of BCH  formula can be rewritten in the form of  the master equation
 \begin{equation}\label{Lindblad}
 \begin{aligned}
   \rho_N-\rho_{N-1}=-2\ic \Big[H,\rho_{N-1}\big]\sin\alpha \\ -4\big[H, [H,\rho_{N-1}]\Big](1-\cos\alpha),
   \end{aligned}
 \end{equation}
where  $\big[H,\rho_{N-1}\big]$ is   the group generator   and
    $$
    4\Big[H, [H,\rho_{N-1}]\Big](1-\cos\alpha)
    $$
    the \emph{damping term} ($\alpha:=\phi\sqrt{k}$).
    In the specific case of our experiment, taking into account  \ref{lttha}, we can rewrite Eq. \ref{Lindblad} as follows:
 \begin{equation}\label{LidbA}
   \begin{aligned}
\rho_N=-2\ic \big[H_{eff},\rho_{N-1}\big]\sin\alpha +\rho_{N-1}\cos\alpha +\\\Big(S_x\otimes y_{N-1}I_y+S_e\otimes(I_e+x_{N-1}  I_x)\Big)(1-\cos\alpha).
 \end{aligned}
 \end{equation}
 Note that the equation  \ref {LidbA} can be considered as an analogue of the qubit master equation  in  Ref. \cite{Gambetta} (Eqs. 3.10, 3.14)  for the model of  measurements in circuit quantum electrodynamics (QED), that is,
     as a discrete time  analogue of the  master equation  in the \emph{Born-Markov  approximation in the Lindblad form}. This is not surprising, since the master equations in Refs. \cite{Lindblad} and  \cite {Kozak}, as well as in other papers, were obtained using perturbation formulas such as Lee-Trotter's formula or Kato's perturbation formula, under some additional strong assumptions.   For example,  in Refs \cite{Davis2} and  \cite{Davis}  the Markovian master equations are given under the assumption of  the weak coupling limit.
  In \cite{Lindblad} and \cite{Kozak}
 the main postulate besides the Markov property was the \emph{complete positivity} of the generator of the quantum semigroup. This is a  strict assumption, the importance of which we discussed   in connection with the concepts of entanglement  and  POVM measurements.
 Our exact formula, of course, stems from the simplicity of the particular experimental model that allows the closed BCH formula to be used.



\subsection{Optical projective readout}\label{OPROUT}

As we mentioned in the  section \ref{tst},  the result of measurements in our experiment is a
sequence $\{n_k\}$ of the number of photons recorded during each readout period.
Thus, the goal of the theory is to obtain a formula for the probability $\mathds{P}\big(m,[0,t)\big)$ that $m$ counts are recorded in the interval $[0,t)$.  (How to relate it to the probabilities \ref{Probz1} given by the theory is explained in Supplementary material, section"SSR method", see also \cite{Jorg}, \cite{Gupta})

Such formulas were obtained in the 1960s for both classical and quantum optical fields and in the phase space representation ($z=x+\ic y$) can be formally written in a similar form (see e.g. \cite{Optik}):
\begin{equation}\label{Pdetect}
  \mathds{P}\big(m,[0,t)\big)=\frac{1}{m!}\int\varphi(z)\big(|z|^2\alpha t\big)^m e^{-|z|^2\alpha t} d^2 z,
  \end{equation}
  where $\alpha$ characterize the efficiency of the detector. In this formal connection, the weight function $\varphi(z)$
  (called the Wigner quasiprobability distribution, or Glauber P-representation) is an analogue of the classical distribution function, but, in the general case, it is not non-negative (See, e.g., the well-known example of a simple harmonic oscillator introduced by Groenevold \cite{Groene} in 1946.)
Thus, the similarity in form does not mean that the quantum theory is physically equivalent to the classical theory and is called in physics the optical equivalence theorem. In cases where the measurement model leads to a positive normalized function $\varphi(z)$, it becomes a conventional normal distribution function, and formula  \ref{Pdetect} determines a meaningful distribution for all $T$.
There are examples when the negativity of the function $\varphi$  does not give rise to any difficulties in the analysis, if the basic laws of the theory are not violated. However, the question arises when $\varphi$ is the usual probability density and, consequently, as a result of measurements, we get a classical stochastic (macroscopic) process.

It follows from the analysis of section \ref{bhnghXY} that, under the influence of a sequence of weak measurements, the state of the target spin tends to a totally mixed state \ref{initial}. Thus, the transformation of a quantum random process into a classical one cannot be explained by decoherence alone. To do this, it is necessary to take into account that the process of projective optical readout is carried out with limited accuracy.

  In our experiment the basic photo-physical mechanisms behind the optical detection of the NV spin are well developed (see e.g \cite{Jorg}, \cite{Gupta}). The spin-dependence of the fluorescence arises
through an intersystem crossing to metastable singlet states, which occurs preferentially from the $m_s =\pm 1$ excited states.
The transient fluorescence signal is typically measured by counting
photons in a brief period following optical illumination. This inevitable strategy misses part of the signal because
 the differential fluorescence remains after the time cutoff, while photons arriving near the end of the counting interval are overweight.

The effect of error in an optical (non-unitary) readout process can be described using a variant of the POVM method (see Appendix \ref{double}), similar to how the occurrence of decoherence as a result of Hamiltonian transformations was demonstrated in \ref{bhnghXY}.
The Wigner distribution $\varphi(z)$ can indeed be negative, but when integrated with a certain non-negative normalized weight function $\sigma_\alpha(x,y)$ it gives a conventional probability distribution
\begin{equation}\label{SmoothW}
   \rho_\alpha(x,y)=\int_{\mathbb{R}^2}\varphi(x-\xi,y-\eta)\sigma_\alpha(\xi,\eta)d\xi d\eta.
\end{equation}
Moreover, if this weight function $\sigma_\alpha(x,y)$ depends on a  parameter   $\alpha$, which characterizes the measurement error of the detector, then using the POVM method it is possible to show that $\rho_\alpha(x,y)$ corresponds to new \emph{commutative approximate}  observables of position and momentum (see \ref {ObservQ},\ref{ObservP} in Appendix \ref{double}).
This is of course consistent with the interpretation that in the case of successive unitary transformations of a composite system, observables $F_\alpha$ given by \ref{R-observ} are generated.

 The exact meaning of this approach and the mathematical details are explained in Appendix \ref{double}.

Thus, we may conclude that the transition from a quantum process to a classical one in the optical projective measurement can be explained as a consequence of measurement inaccuracy. This, apparently, explains the fact that the results of theoretical analysis based on macroscopic model \cite{Luka} and  the classical probabilistic scheme assuming Poisson statistics and independence of repeated observation bins  \cite{Gupta} give a good approximation in the analysis of experimental data.

\subsection{Initial state generation}
\label{initST}


The special polarization procedure (see  \cite{Taminiau}, \cite {Cujia}), do not guarantee 100 \% polarization.
In this regard, below we  analyze the process which realizes an incomplete (depending on $\alpha $) polarization of the nuclear spin during the first few measurements.

In this case,  the initial condition is believed to be the thermal equilibrium state which is the result of interactions with other spins and the environment
 and is described by the totally mixed state
\begin{equation}\label{initial}
  \rho^I_0=\frac{1}{2}|0\rangle\langle 0|+\frac{1}{2}|1\rangle\langle 1|=\frac{1}{2}I_\alpha +\frac{1}{2}I_\beta=I_e.
\end{equation}
As for the sensor spin, NV center
is easily optically pumped with suitable fidelity into the
polarized state $|0\rangle.$ Thus, the initial state of the combined
system can be naturally assumed to be
\begin{equation}\label{IniCom}
 \rho^0=\rho_S(0)\otimes\rho^I_0=(S_e+S_z)\otimes I_e.
\end{equation}
After the $\pi/2$-pulse to NV spin  about the $y$-axis this state is converted to
\begin{equation}\label{IniCom1}
 \rho^0_1=(S_e+S_x)\otimes I_e.
\end{equation}
This state is separable but not absolutely separable \cite{Adhikari}, therefore we can get an entangled state with the help of a global transformation.
To this end, we first apply the  interaction controlled by  the Hamiltonian $H_{eff}=2\alpha S_z\otimes I_x$  using  microwave manipulation of the electronic spin.

To  calculate the effect of the first interaction
we apply the BCH formula to $\rho^0_1$ given by \ref{IniCom1} with
$$ [H_{eff},\rho^0_1]=\ic  S_y\otimes\frac{1}{2}I_x,  \quad \Delta=S_e\otimes I_e.$$
Thus, we obtain
\begin{equation}\label{rot0}
\rho^0_2 =S_y\sin \alpha\otimes I_x+(S_e+S_x\cos\alpha)\otimes I_e,
\end{equation}
and after the $\pi/2$ pulse along $S_x$ we get
\begin{equation}\label{rot1}
\begin{aligned}
 \rho^0_3=S_z\sin \alpha\otimes I_x+(S_e+S_x\cos\alpha)\otimes I_e.
  \end{aligned}
\end{equation}
Considering $S_z$ as an observable with eigenvalues and projections  given by \ref {SZobs}, we can \emph{predict}, using von Neumann's canonical rule (sometimes also called Bayesian estimation), the a posteriori state of the system.
 \begin{gather}
    \rho_+(0)=\frac{(S^\alpha\otimes \mathds{1})\rho^1 (S^\alpha\otimes \mathds{1})}{\mathds{P}_0^{S_z}(\lambda_+)}=(S_e+S_z)\otimes (I_e+\sin\alpha I_x)\label{apostr}\\
  \rho_-(0)=\frac{(S^\beta\otimes \mathds{1})\rho^1 (S^\beta\otimes \mathds{1})}{\mathds{P}_0^{S_z}(\lambda_-)}=(S_e-S_z)\otimes (I_e+\sin\alpha I_x),\label{apostr1}
 \end{gather}
The probabilities of
occurrence of the eigenvalues $\lambda_+$ and $\lambda_-$ are given by
\begin{equation}\label{probAB}
   \begin{aligned}
   \mathds{P}_0^{S_z}(\lambda_+)=\trace[S^\alpha \rho^1]=\frac{1}{2},\\
   \mathds{P}_0^{S_z}(\lambda_-)=\trace[S^\beta \rho^1]=\frac{1}{2}.
   \end{aligned}
\end{equation}
Since no projective readout is performed at this stage, our knowledge of the state of the system is limited to the information that, with a probability of 1/2, the nuclear spin is in a state of incomplete polarization either in direction $|0\rangle$ or in direction $|1\rangle$.
Nevertheless, this information is adequate to the  calculating the correlation function of the output process in the POVM measurement scheme.

In what follows, we will use the notation
(compare \ref{AinitiaC})
\begin{equation}\label{rand St}
  \rho_\pm(0)=(S_e+S_z)\otimes(I_e \pm I_x\sin\alpha).
\end{equation}
 for this state. Having done the calculations of Section \ref{bhnghXY} with the replacement of $x=\cos(\omega t_f)$ by $x=\pm \sin\alpha\cos(\omega t_f)$, it is easy to obtain a modification of  expression \ref{htrurc}
  \begin{equation}\label{Dinamik}
    x_N^\pm\approx \pm\sin\alpha\cos (\omega Nt_f)\exp\Big\{-\frac{(N-1)\alpha^2}{4}\Big\},
  \end{equation}
  which we use below when calculating the correlation function. In this case
the  output process is given by
  \begin{equation}\label{Trans}
  \begin{aligned}
 \zeta_N=\trace [\sigma_z \rho^S(N)]=x_N^\pm \sin \alpha\approx \\=\pm \sin^2\alpha\cos (\omega Nt_f)\exp\Big\{-\frac{(N-1)\alpha^2}{4}\Big\}.
 \end{aligned}
\end{equation}

\subsection{ Autocorrelation of  observables $I_x$ and $S_z$ and relative entropy}
\label{rdfynqbsr}\label{theory}

The autocorrelation function of a classical random process is defined as the second moment of the joint distribution.  In quantum mechanics, the definition of a joint distribution in the classical sense is meaningless due to the impossibility of (exact) simultaneous measurements of non-commuting observables in the framework of projective von Neumann measurements.
Nevertheless, this problem  is solved in terms
of the positive operator value measures  in the way, which is a natural consequence of conventional ideas of quantum theory (see \cite{Davis} and Appendix \ref{POVMs}).

We intend to calculate the correlation of the output process $\{\zeta_n\}$ given by \ref{Trans}. However, since its probabilistic properties are completely determined by the sequence $\{ x_N^\pm\}$  \ref{Dinamik}, we will deal with the calculation of correlations for this process  associated with the observable $I_x$. It means that, depending on the realized sign of the polarization, all measurements are determined either by the sequence $\{x_N^+\}$, or by the sequence $\{x_N^-\}$ .

First of all, we note that the probabilities of realizing $\{x_N^+\}$ or $\{x_N^-\}$ are determined by the probabilities \ref{probAB}
\begin{equation}\label{aB}
   \begin{aligned}
   \mathds{P}(x_N^+)=\mathds{P}^{S_z}_0\big[\lambda_+]=\frac{1}{2},\\
     \mathds{P}(x_N^+)=\mathds{P}^{S_z}_0\big[\lambda_-]=\frac{1}{2}.
   \end{aligned}
\end{equation}

Recall that observable $I_x$
has eigenvalues $\mu_+=\frac{1}{2},$ $\mu_-=-\frac{1}{2}$
with   the eigenvectors
\begin{equation}\label{sxegv}
  |0\rangle=\frac{1}{\sqrt{2}}(1,0)^T, \quad |1\rangle=\frac{1}{\sqrt{2}}(0,1)^T
\end{equation}
and corresponding projection operators $I_x^\alpha:=I_e+I_x$,  $I^\beta_x:=I_e-I_x$,  given by
\begin{equation}\label{sxegvPO}
  I_x^\alpha=\frac{1}{2}\begin{bmatrix}
    1 & 1\\
    1 & 1
 \end{bmatrix},\qquad
  I^\beta_x=\frac{1}{2}\begin{bmatrix}
    \quad1 & -1\\
    -1&\quad 1
 \end{bmatrix}.
\end{equation}
The measurements of the observable $I_x=\mu_+ I_x^\alpha  + \mu_- I^\beta_x $ corresponding to the projectors $I_x^\alpha$ and $I^\beta_x$  are given  by the probabilities (compare \ref{Probz1})
\begin{equation}\label{CondiM}
 \begin{aligned}
 \mathds{P}_N^{I_x}\big[ \mu_+|x_N^\pm\big]=\mathds{P}_N^{I_x}\big[ \mu_+|\lambda_\pm\big]=\trace\big[I_x^\alpha\ \rho^I_\pm(N)\big ],\\
\mathds{P}_{N}^{I_x}\big[\mu_-|x_N^\pm\big]=\mathds{P}_N^{I_x}\big[ \mu_-|\lambda_\pm\big]=\trace\big[I^\beta_x\ \rho^I_\pm(N) \big],
 \end{aligned}
\end{equation}
where
$
\rho^I_\pm(N)=I_e+x_N^\pm  I_x+ y_N^\pm I_y,
$
corresponds (with obvious modification ) to the density matrix \ref{roI} obtained for the case of deterministic polarization.

 Thus,  by direct calculation we find that the probabilities  $\mathds{P}_{N}^{I_x}\big[ \mu_\pm|\lambda_\pm\big]$ are given by
\begin{equation}\label{ghjjgh1}
\begin{aligned}
\mathds{P}_N^{I_x}\big[ \mu_+|\lambda_+\big]=\trace\big[I_x^\alpha\  \rho^I_+(N)\big ]=\frac{1}{2}\big(1+ x_N\big)\\
\mathds{P}_N^{I_x}\big[ \mu_+|\lambda_-\big]=\trace\big[I_x^\alpha\  \rho^I_-(N)\big ]=\frac{1}{2}\big(1- x_N\big)\\
\mathds{P}_N^{I_x}\big[ \mu_-|\lambda_+\big]=\trace\big[I^\beta_x\  \rho^I_+(N)\big ]=\frac{1}{2}\big(1- x_N\big)\\
\mathds{P}_N^{I_x}\big[ \mu_-|\lambda_-\big]=\trace\big[I^\beta_x\  \rho^I_-(N)\big ]=\frac{1}{2}\big(1+ x_N\big),
\end{aligned}
\end{equation}
where $x_N:=\sin \alpha\cos (\omega Nt_f)\exp\Big\{-\frac{(N-1)\alpha^2}{4}\Big\}$.

 Using \ref{ghjjgh1}, we define the  probabilities $p^{I_x}\big( \mu_{\pm},\lambda_{\pm}\big)$ to obtain  the value  corresponding to the state $\rho^I(N)$ \emph{together} with the value, corresponding to the state $\rho^I(0)$  by the Bayes rule
\begin{equation}\label{jointI}
\begin{aligned}
p^{I_x}\big(\mu_{+}, \lambda_+\big)= \mathds{P}_N^{I_x}\big[ \mu_+|\lambda_+\big]\ \mathds{P}^{S_z}_0\big[\lambda_{+}\big]
\\=\frac{1}{2}\big(1+ x_N\big)\ \mathds{P}^{S_z}_0\big[\lambda_{+}\big] ,\\
p^{I_x}\big(\mu_{+}, \lambda_-\big)= \mathds{P}_{N}^{I_x}\big[ \mu_+|\lambda_-\big]\ \mathds{P}^{S_z}_0\big[\lambda_{-}\big]
\\=\frac{1}{2}\big(1- x_N\big)\ \mathds{P}^{S_z}_0\big[\lambda_{-}\big] ,\\
p^{I_x}\big(\mu_{-}, \lambda_+\big)= \mathds{P}_{N}^{I_x}\big[ \mu_-|\lambda_+\big]\ \mathds{P}^{S_z}_0\big[\lambda_{+}\big]
\\=\frac{1}{2}\big(1- x_N\big)\ \mathds{P}^{S_z}_0\big[\lambda_{+}\big] ,\\
p^{I_x}\big(\mu_{-}, \lambda_-\big)= \mathds{P}_{N}^{I_x}\big[ \mu_-| \lambda_-\big]\ \mathds{P}^{S_z}_0\big[\lambda_{-}\big]
\\=\frac{1}{2}\big(1+ x_N\big)\ \mathds{P}^{S_z}_0\big[\lambda_{-}\big].
\end{aligned}
\end{equation}
Substituting  \ref{aB}  into \ref{jointI}, we obtain a set of probabilities that determine the \emph{joint distribution} $p^{I_x}\big(\mu_{\pm}, \mu_\pm\big)$:
\begin{equation}\label{Icond1}
\begin{aligned}
 p^{I_x}\big(\mu_{+}, \lambda_+\big)=\frac{1}{4}\big(1+x_N\big),  \quad p^{I_x}\big(\mu_{+}, \lambda_-\big)=\frac{1}{4}\big(1-x_N\big) ,\\
  p^{I_x}\big(\mu_{-}, \lambda_+\big)=\frac{1}{4}\big(1-x_N\big),  \quad p^{I_x}\big(\mu_{-}, \lambda_-\big)=\frac{1}{4}\big(1+x_N\big) .
  \end{aligned}
\end{equation}

We   define  the autocorrelation
of the  process corresponding to observable $I_x$  as
\begin{equation}\label{kondI}
\begin{aligned}
C^{I_x}(0,N)=
\sum_{\pm,\pm}\mu_{\pm}\lambda_{\pm}\ p^{I_x}\big( \mu_{\pm},\lambda_{\pm}\big)
\end{aligned}
\end{equation}
Substituting now expressions \ref{Icond1} in  \ref{kondI} we obtain
\begin{equation}\label{RoppI}
\begin{aligned}
  C^{I_x}(0,N)=x_N\\=\sin \alpha\cos (\omega Nt_f)\exp\Big\{-\frac{(N-1)\alpha^2}{4}\Big\}.
  \end{aligned}
  \end{equation}

Considering the connection \ref{Trans} of the registered process $\zeta_N$ and the process $x^\pm_N$, it is now easy to obtain an expression for the correlation function of the output signal
\begin{equation}\label{Czr}
\begin{aligned}
   C^{S_z}(0,N)=x^N\sin\alpha \\=\sin^2\alpha\cos (\omega Nt_f)\exp\Big\{-\frac{(N-1)\alpha^2}{4}\Big\}.
\end{aligned}
\end{equation}

In Appendix \ref{POVMs} we introduce an analogue of the classical \emph{relative entropy} of Kullback and Leibler \ref{InfEntr} as a measure of the discrepancy of information that occurs during the measurement process. 
We will now apply this formula to our particular case. 

The information that we intended to obtain during the  measurement is the initial value $x=\cos \omega t_f$ of the amplitude of the observable $I_x$.
During the measurement process, after $N$ steps, this information was translated with inevitable distortions into the amplitude $\zeta_N$ of the observable $S_z$, determined by the expression \ref{Trans}, and then was registered as a result of optical readout. Without loss of generality, we restrict ourselves for simplicity to the case of a polarized initial condition. In this case, measurements of the observable $S_z$ instead of four conditional probabilities in \ref{ghjjgh1} generate only a pair of unconditional probabilities given by the formula (recall that $\lambda_ +$, $\lambda_-$ correspond to the projections $S^\alpha$, $ S^\beta$ of the observable $S_z$)
\begin{equation}\label{EntropeEX}
\begin{aligned}
  \mathds{P}_N^{S_z}\big[ \lambda_+\big]=\frac{1}{2}\big(1+  x_N \sin \alpha\big),\\
  \mathds{P}_N^{S_z}\big[ \lambda_-\big]=\frac{1}{2}\big(1-  x_N \sin \alpha\big).
  \end{aligned}
\end{equation}
An "ideal" measurement of the observable $I_x$ in the state $\rho^I(1)=I_e+I_x \cos(\omega t_f)+I_y \sin(\omega t_f)$ gives two probabilities
$$
\mathds{P}^{I_x}=\frac{1}{2}\big(1+  \cos(\omega t_f)\big) \quad \text{and}\quad \mathds{P}^{I_x}=\frac{1}{2}\big(1-  \cos(\omega t_f)\big).
$$
Hence, we get the following expression for the relative entropy
\begin{equation}\label{EP}
\begin{aligned}
  H(S_z|I_x)=\frac{1}{2}\big(1+  x_N \sin \alpha)\cdot\log\frac{1+  x_N \sin \alpha}{1+  \cos(\omega t_f)} \\ +\frac{1}{2}\big(1-  x_N \sin \alpha)\cdot\log\frac{1-  x_N \sin \alpha}{1-  \cos(\omega t_f)}.
  \end{aligned}
\end{equation}
The Taylor expansion of the function $\log(1+y)$ gives a very simple and intuitive interpretation of the relative entropy in terms of the weighted differences $$(x_N \sin\alpha)^k-\big(\cos(\omega t_f)\big) ^k$$ of all powers of the functions $ x_N \sin\alpha$ and $\cos(\omega t_f)$ characterizing the discrepancy between the true and measured values.

\appendix

\section{Mathematical formulations of the Wigner-Bell theorems}\label{MatBell}
We will give a proof of the theorem based on formula \ref{StrogA}, which just characterizes the difference between the classical probability, for which it is valid, and the quantum (non-commutative) probability.
\begin{theorem}[\bf The Wigner-d'Espagnat inequality]\label{Wigner}
Let $\xi, \phi, \theta$ be arbitrary random variables with values $\pm 1$ on a Kolmogorov probability space $(\Omega, \mathcal{F},\mathds{P})$.
 Then the following inequality holds:
 \begin{equation}\label{Ythfd}
 \begin{aligned}
   \mathds{P}(\xi=+1, \phi=+1)+\mathds{P}(\phi=-1, \theta=+1)
   \\ \geq \mathds{P}(\xi=+1, \theta=+1).
   \end{aligned}
 \end{equation}
 \end{theorem}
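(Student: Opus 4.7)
The plan is to exploit the fact that all three random variables are defined on the \emph{same} Kolmogorov probability space $(\Omega,\mathcal{F},\mathds{P})$, which is exactly what permits us to condition on the value of the auxiliary variable $\phi$ inside an event that mentions only $\xi$ and $\theta$. First I would partition the event $\{\xi=+1,\theta=+1\}$ according to the two possible values of $\phi$:
$$\{\xi=+1,\theta=+1\}=\{\xi=+1,\phi=+1,\theta=+1\}\sqcup\{\xi=+1,\phi=-1,\theta=+1\}.$$
Because the two pieces are disjoint, finite additivity of $\mathds{P}$ (the trivial disjoint case of the strong additivity \ref{StrogA}) gives the exact identity $\mathds{P}(\xi=+1,\theta=+1)=\mathds{P}(\xi=+1,\phi=+1,\theta=+1)+\mathds{P}(\xi=+1,\phi=-1,\theta=+1)$.

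Next I would bound each summand from above by dropping one of the three constraints. Monotonicity of $\mathds{P}$ on the Boolean algebra $\mathcal{F}$ yields
$$\mathds{P}(\xi=+1,\phi=+1,\theta=+1)\leq\mathds{P}(\xi=+1,\phi=+1),$$
$$\mathds{P}(\xi=+1,\phi=-1,\theta=+1)\leq\mathds{P}(\phi=-1,\theta=+1).$$
Adding these two inequalities and substituting into the decomposition from the previous step yields the Wigner--d'Espagnat inequality \ref{Ythfd} immediately. The whole proof is thus three lines of Boolean reasoning.

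I do not expect any technical obstacle; the conceptual point that deserves emphasis is the role of the \emph{common sample space}. The initial decomposition tacitly uses that the triple $(\xi,\phi,\theta)$ possesses a joint distribution, i.e.\ that all three observables are simultaneously measurable on $\Omega$, and it is at this step that the proof silently invokes the distributivity of the underlying lattice. In the non-commutative (quantum) setting the would-be joint event $\{\xi=+1,\phi=+1,\theta=+1\}$ need not correspond to any projection in the orthomodular lattice generated by the relevant observables, so the very first step of the argument breaks down. This is precisely the feature the paper wishes to highlight, and it explains why the inequality \ref{Ythfd} functions as a genuine test separating classical hidden-variable models from quantum mechanics.
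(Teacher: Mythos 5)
Your proof is correct, and it reaches \ref{Ythfd} by a slightly different route than the paper. The paper's Appendix \ref{MatBell} argument is organized around the strong additivity identity \ref{StrogA}: it introduces the two events $A$ and $B$ of \ref{setA}--\ref{setB} (the cylinders $\{\xi=1,\phi=1\}$ and $\{\phi=-1,\theta=1\}$, unconstrained in the third variable), writes $\mathds{P}(A)+\mathds{P}(B)=\mathds{P}(A\cap B)+\mathds{P}(A\cup B)$, and then discards a non-negative term to arrive at \ref{DEsire}. You instead decompose the right-hand event, $\{\xi=+1,\theta=+1\}=\{\xi=+1,\phi=+1,\theta=+1\}\sqcup\{\xi=+1,\phi=-1,\theta=+1\}$, and bound each piece by monotonicity; this is the dual decomposition, since the two cylinders $A$ and $B$ are in fact disjoint and together cover $\{\xi=+1,\theta=+1\}$, so the paper's identity effectively reduces to $\mathds{P}(A)+\mathds{P}(B)=\mathds{P}(A\cup B)\geq\mathds{P}(\xi=+1,\theta=+1)$. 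What the paper's phrasing buys is the explicit appearance of the lattice identity \ref{StrogA} that it wants to contrast with the orthomodular (quantum) case; what your phrasing buys is transparency and rigor, since it sidesteps the loose set bookkeeping in \ref{setA}--\ref{setAuB} (read literally, $A\cap B=\varnothing$ rather than $\{\xi=1,\theta=1\}$, and \ref{KHR} is only correct once the argument is restated as above). Your closing remark also isolates the right hypothesis: the existence of a joint distribution for $(\xi,\phi,\theta)$ on a single sample space, which is precisely what is unavailable on a non-distributive lattice of projections.
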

 \begin{proof}
 Consider the following sets
 \begin{gather}\label{}
  A=\{\xi=1,\phi=1, \theta=1, \theta=-1\},  \label{setA}\\
   B=\{\xi=1,\xi=-1, \phi= -1\theta=1\},  \label{setB}\\
   A\cap B=\{\xi=1,\theta=1\},  \label{setAB}\\
   A\cup B=\{\xi=1,\xi=-1,\phi=1, \phi= -1, \theta=1, \theta=-1\}.\label{setAuB}
 \end{gather}
  Hence, by  \ref{StrogA}, we have
 $$
 \mathds{P}(A)+\mathds{P}(B)=\mathds{P}(A\cap B)+\mathds{P}(A\cup B).
 $$
But since the set $A$ contains both values of $\theta$ and  the set $B$ contains both values of $\xi$, this equality  is equivalent to the following representation:
\begin{equation}\label{KHR}
\begin{aligned}
  \mathds{P}\{\xi=1, \phi=1\}+\mathds{P}\{\phi=-1, \theta=1\}\\
 = \mathds{P}\{\xi=1, \theta=1\}+ \mathds{P}(A\cup B)
  \end{aligned}
\end{equation}
and due to the non-negativity of the probability, we get
\begin{equation}\label{DEsire}
 \begin{aligned}
  \mathds{P}\{\xi=1, \phi=1\}+\mathds{P}\{\phi=-1, \theta=1\}\\
 \geq \mathds{P}\{\xi=1, \theta=1\}.
 \end{aligned}
\end{equation}
\end{proof}

\section{Modification of the Baker-Campbell-Hausdorff formula }
 \label{BCH-A}
We show that under conditions
\begin{gather}
  B= [H,A], \qquad [H,B] =kA -k\Delta\label{ecjdbz1}\\
  [H,\Delta]=0 \qquad \big[H,[H,B]+k\Delta\big]=\big[H,[H,B]\big].\label{ecjdbz2}
\end{gather}
the following modification of the BCH formula holds:
 \begin{equation}\label{DCHG}
 \begin{aligned}
  UAU^*=e^{-\ic H\phi}A e^{\ic H\phi}=A\cos (\phi\sqrt{k})+\Delta \big(1-\cos(\phi\sqrt{k})\big)\\-\frac{1}{\sqrt{k}} \ic B\sin(\phi\sqrt{k}).
  \end{aligned}
 \end{equation}
By direct calculation
\begin{gather*}
 UAU^*=e^{-\ic H\phi}A e^{\ic H\phi}=A-(\ic \phi)[H,A]\\+\frac{(\ic \phi)^2}{2!}\big[H,[H,A]\big]+\frac{(\ic \phi)^3}{3!}\Big[H,\big[H, [H,A]\big]\Big]+\ldots\\
= \Big(A+ \frac{(\ic\phi)^2}{2!}[H,B]+ \frac{(\ic\phi)^4}{4!}\Big[H,\big[H,[H,B]\big]\Big]+\ldots\Big)\\
-\Big(\ic\phi[H,A]+\frac{(\ic\phi)^3}{3!}\Big[H,\big[H,B\big]\Big]+\ldots\Big)\\
= \Big(A+ \frac{(\ic\phi)^2}{2!}\big(kA-k\Delta\big)+ \frac{(\ic\phi)^4}{4!}\Big[H,\big[H,k\big(A-k\Delta\big)\big]\Big]\ldots\Big)\\
-\Big(\ic\phi[H,A]+\frac{(\ic\phi)^3}{3!} \big[H,(kA -k\Delta)\big]+\ldots\Big)
 \end{gather*}
 Since $[H,\Delta]=0$ we write
 \begin{gather*}
UAU^*=\Big(A+ \frac{(\ic\phi)^2}{2!}(kA-k\Delta)+ \frac{(\ic\phi)^4}{4!}\Big[H,\big[H,kA\big]\Big]+\ldots\Big)
\\-\Big(\ic\phi[H,A]+\frac{(\ic\phi)^3}{3!} \big[H,kA\big]+\ldots\Big)
\\=\Big(A+ \frac{(\ic\phi)^2}{2!}(kA-k\Delta)+ \frac{(\ic\phi)^4}{4!}k\Big [H,B\Big]+\ldots\Big)\\
-\Big(\ic\phi B+\frac{(\ic\phi)^3}{3!} kB+\ldots\Big)
 \end{gather*}
 Next we use $[H,B]=kA-k\Delta$ to get
 \begin{gather*}
=\Big(A+ \frac{(\ic\phi)^2}{2!}(kA-k\Delta)+ \frac{(\ic\phi)^4}{4!}k(kA -k\Delta)+\ldots\Big)
\\- B\Big(\ic\phi +\frac{(\ic\phi)^3}{3!} k+\ldots\Big)
\\=A\Big(1+ \frac{(\ic\phi)^2}{2!}k+ \frac{(\ic\phi)^4}{4!}k^2 +\ldots\Big)
\\+\Delta\Big( -\frac{(\ic\phi)^2}{2!}k- \frac{(\ic\phi)^4}{4!}k^2 +\ldots\Big)
- B\Big(\ic\phi +\frac{(\ic\phi)^3}{3!} k+\ldots\Big)\newline
\\=A\Big(1- \frac{\phi^2}{2!}k+ \frac{\phi^4}{4!}k^2 +\ldots\Big)
\\-\Delta\Big(1-1 -\frac{\phi^2}{2!}k + \frac{\phi^4}{4!}k^2 +\ldots\Big)
- B\Big(\ic\phi -\frac{(\ic\phi)^3}{3!} k+\ldots\Big)
\\
=A\Big(1- \frac{\phi^2}{2!}k+ \frac{\phi^4}{4!}k^2 +\ldots\Big)
\\+\Delta-\Delta\Big(1 -\frac{\phi^2}{2!}k + \frac{\phi^4}{4!}k^2 +\ldots\Big)\\
- B\Big(\ic\phi -\frac{(\ic\phi)^3}{3!} k+\ldots\Big)
\\=A\cos (\phi\sqrt{k})+\Delta \big(1-\cos(\phi\sqrt{k})\big)-\frac{1}{\sqrt{k}} \ic B\sin(\phi\sqrt{k}).
 \end{gather*}

\section{POVM measurements }\label{POVMs}

Let $\Omega$ be a set with a $\sigma$-field $ \mathcal{F}$,  $\mathcal{H}$ be a Hilbert space and $\mathcal{B}_{sa}(\mathcal{H})$ be a space of bounded self-adjoint operators in $\mathcal{H}$.
A positive operator valued (POV) measure on $\Omega$ is defined to be a map $F:\Omega\rightarrow \mathcal{B}_{sa}(\mathcal{H})$
 such that for $\Delta\in \mathcal{F}$,  $F(\Delta)\geq F(\varnothing)$,
 and if  $\{\Delta_n\}$ is a countable family of disjoint sets in $\mathcal{F}$ then
 $$
 F\Big(\bigcup_n \Delta_n\Big)=\sum_n F(\Delta_n),
 $$
 where the series converges in the weak operator topology.

POV measures naturally arise in the process of repeated (weak) measurements of some quantum observable (see section \ref{bhnghXY}), the scheme of which is described below. This process is implemented on the composite Hilbert space $\mathcal{H}=\mathcal{H}_D\otimes \mathcal{H}_Q$ by coupling  the primary quantum system $Q$, initially in the $\rho$ state, on the Hilbert space $\mathcal{H}_Q$, to a quantum measuring device $D$, initially in the state
\begin{equation}\label{KrsSig}
\sigma=\sum_k \lambda_k  |e_k\rangle \langle e_k|,
\end{equation}
where  the states  $|e_k\rangle$  form an orthonormal basis for the Hilbert space $\mathcal{H}_D$ of the  meter.
 The two systems interact during a period $\tau$ under the control of some Hamiltonian and the result of the interaction is described by the unitary operator $U$ \emph{acting on the composite system}.
After interaction the initial density matrix  is transformed into
\begin{equation}\label{KRStraS}
\rho_{QD}(\tau)=U(\tau)\rho\otimes\sigma U^*(\tau).
\end{equation}
   \emph{The final projection measurement} is  determined  by
orthogonal projectors $\{P_\alpha\}$
$$P_\alpha =\sum_j |\phi^\alpha_ j\rangle\langle\phi^\alpha_j|, \quad  \alpha\in \mathbb{Z},$$
associated with a \emph{measurable observable} $\mathcal{A}$ of the meter.
Here  the states $|\phi^\alpha_ j\rangle$ form an orthonormal basis for the Hilbert space $\mathcal{H}_D$ of the  meter, and  satisfy the completeness
relation
\begin{equation}\label{KRS-compl}
 \sum_{\alpha,j} |\phi^\alpha_j\rangle\langle\phi^\alpha_j|=\sum_\alpha P_\alpha=\mathds{1}_{H_D}
\end{equation}
(compare the observable $S_z$ and projections $S^\alpha=S_e+S_z$, \ $S^\beta=S_e-S_z$ in section \ref{bhnghXY}).
The post-measurement  state of the \emph{primary system}
 is obtained by taking partial trace  with respect to $D$:
\begin{equation}\label{tracing}
  \trace_D (P_\alpha U\rho\otimes\sigma U^*P_\alpha)=\trace_D (P_\alpha U\rho\otimes\sigma U^*)=S_\alpha(\rho).
\end{equation}
The probability to obtain result $\alpha$ in the measurement \emph{on the meter} is dictated by the standard von Neumann rules
for the  orthogonal projectors measurement:
\begin{equation}\label{NewmMesA}
  \mathds{P}_{\rho\otimes \sigma}(\alpha)= \trace [P_\alpha U\rho\otimes \sigma U^*]:=\trace [S_\alpha(\rho)].
\end{equation}
Thus, the basic characterization of the measurement model is given by the quantum operation  $S_\alpha$,  which is the linear transformation of the initial state
\begin{equation}\label{Strans}
\rho\rightarrow S_\alpha(\rho).
\end{equation}
Substituting \ref{KrsSig} and \ref{KRS-compl} in \ref{tracing} we get
\begin{equation}\label{krsfor}
\begin{aligned}
  S_\alpha(\rho)=\sum_{j,k} \lambda_k^{1/2}\langle\phi^\alpha_j|U|e_k\rangle\rho\langle e_k|U^*|\phi^\alpha_ j\rangle\lambda_k^{1/2}\\=\sum_{j,k}M^\alpha_{jk}\rho (M^{\alpha}_{ jk})^*.
  \end{aligned}
\end{equation}
The set of operators $M^\alpha_{ jk}=\sqrt{\lambda_k}\langle\phi^\alpha_j|U|e_k\rangle$
provides a \emph{Kraus decomposition}  of the operation $ S_\alpha$, which in turn defines
the set of effects  $F_\alpha$  given by
\begin{equation}\label{Kras-Effec}
\begin{aligned}
 F_\alpha:= \sum_{j,k}(M^{\alpha}_{ jk})^*M^\alpha_{jk}=\sum_{j,k}\lambda_k \langle e_k|U^*|\phi^\alpha_ j\rangle\langle\phi^\alpha_j|U|e_k\rangle\\=\trace_D  [U^*P_\alpha U\sigma].
 \end{aligned}
\end{equation}
The Kraus operators  and hence the set of effects $\{F_\alpha\}$ satisfy a completeness relation:
\begin{equation}\label{Kraus-comp}
\begin{aligned}
  \sum_{\alpha,j,k} (M^\alpha_{ jk})^* M^\alpha_{jk}= \sum_{\alpha,j,k}\lambda_k\langle e_k|U^*|\phi^\alpha_ j\rangle\langle\phi^\alpha_j|U|e_k\rangle\\ =\trace_D(U^*U\sigma)=\trace (\mathds{1}\otimes \sigma)=\mathds{1}.
  \end{aligned}
\end{equation}
The probability to obtain result $\alpha$  in the measurement on the ancilla can now be written  as
\begin{equation}\label{NewmMes}
\begin{aligned}
  \mathds{P}_{\rho\otimes \sigma}(\alpha)= \trace [S_\alpha(\rho)]= \trace \big[ \rho\sum_{j,k}(M^\alpha_{ jk})^* M^\alpha_{jk}\big]\\=\trace \big[\rho \cdot\trace_D  [U^*P_\alpha U\sigma]\big]=\trace [\rho F_\alpha].
  \end{aligned}
\end{equation}
Therefore, \emph{we may identify a set of effects $\{F_\alpha\}$ or, equivalently, a set of Kraus operators
  $\{M^\alpha_{jk}\}$ with a generalized observable  in the sense that
  the operator}
\begin{equation}\label{PoVM}
R(E)=\sum_{\alpha\in E}F_\alpha, \quad E \subset\mathbb{Z},\\
 \end{equation}
is a \emph{positive operator-valued measure}  and
$$
\mathds{P}_{\rho\otimes \sigma}(E)=\trace [\rho R(E)], \quad E \subset\mathbb{Z}.
$$
The equation \ref{NewmMes} demonstrates that
quantum observables are defined and measured relative
to a reference frame (ancillas) and therefore can be considered as relative attributes. 

\emph{Thus, the quantity
$F_\alpha$ of a physical system is actually identified by the real
experimental equipment used to measure the system.}
The relative nature of the observable, associated with \emph{effects} $\{F_\alpha\}$ , suggests the introduction of relative entropy as a measure of information transforming from (immeasurable) observable, related to the prime system, and a measurable observable $\mathcal{A}$, set by projections $P_\alpha$ related to the detector.

Let us assume that the observable $\mathcal{O}$ associated with the prime system, which is not accessible for direct measurement, is given by the projectors $\{Q_\alpha\}$ and that the unitary transformation $U$ in \ref{KRStraS} defining the measurement process uniquely connects the projectors $Q_\alpha$ and $\{P_\alpha\}$ by some relation (compare the observables $I_x$ and $S_z$ and their projectors  in section \ref{bhnghXY}).
By analogy with commutative  probability theory we define  the \emph{relative entropy} (called also information divergence and  introduced in classical probability by Kullback and Leibler \cite{Kullback}) by
\begin{equation}\label{InfEntr}
\begin{aligned}
  H(\mathcal{A}|\mathcal{O})=\sum_\alpha \mathds{P}_{\rho\otimes\sigma}(\alpha)\log\frac{\mathds{P}_{\rho\otimes\sigma}(\alpha)}{\mathds{P}_{\rho}(\alpha)}\\
 = \sum_\alpha \trace [\rho F_\alpha]\Big[\log\trace [\rho F_\alpha]-\log\trace [\rho Q_\alpha]\Big]
  \end{aligned}
\end{equation}
Recall that in noncommutative case for an isolated system  the relative entropy of a state $\omega$ with respect to another state $\varphi$ is usually defined in terms of the corresponding density operators by $\rho_\omega$ and $\rho_\varphi$
$$
H(\omega|\varphi)=\trace \big[\rho_\omega[ \log \rho_\omega- \log \rho_\varphi]\big]
$$

\section{The Weyl-Wigner transformation and approximate observables in the phase space representation}\label{double}

Let $H=L^2(\mathbb{R}),$ be  the Hilbert space, $\psi\in L^2(\mathbb{R})$  a state of a quantum system  and $\alpha\in L^2(\mathbb{R})$ be a function of norm one, whose expectation is zero. \emph{This function
can be regarded as specifying certain  limitations of the measurement device.}
If we define a function
\begin{equation}\label{alf}
\alpha_{xy}=e^{\ic yq}\alpha(q-x),
\end{equation}
where  the factor $\exp\{\ic yq\}$ simply maps the measurement uncertainty given by $\alpha$ from the Hilbert space into  the phase space, then for any density operator $\rho$   the \emph{non-negative} continuous function $\rho_\alpha (x,y)$ on phase space given by ($\langle\cdot,\cdot\rangle_L^2$ is a scalar product in $L^2$)
 \begin{equation}\label{jointdens}
  \rho_\alpha (x,y):=\frac{1}{2}\big\langle\rho \alpha_{xy}, \alpha_{xy}\big\rangle_{L^2},
\end{equation}
is  a probability density on $\mathbb{R}^2$
 $$
 \int_{\mathbb{R}^2}\rho_\alpha (x,y)dx dy=1.
 $$
  Recall that the   Weyl operator $W(u,v)$ is defined on $L^2(\mathbb{R})$ by
\begin{equation}\label{wign}
\Big(W(u,v)\psi\Big)(x)=\Big(e^{\ic uQ+\ic vP}\psi\Big)(x).
\end{equation}
 The following statement is due to Davis \cite{Davis}:

\emph{Let $\rho_\alpha(x, y)$ be the probability density of the state $\rho$ on phase space defined by \ref{jointdens} Then}
 \begin{equation}\label{JWdcon}
 \int_{\mathbb{R}^2}\rho_\alpha(x,y)e^{\ic xu+\ic yv}dxdy=\trace \big[\rho e^{\ic uQ+\ic vP}\big]\langle \alpha, W(u,v)\alpha\rangle_{L^2}.
 \end{equation}

 \quad

 The Wigner density $\varphi^W $ of a state $\rho$ is defined formally by the equation
\begin{equation}\label{Wigner}
\begin{aligned}
\int_{\mathbb{R}^2}\varphi^W(x,y)e^{\ic xu+\ic yv}dxdy=\trace \big[\rho e^{\ic uQ+\ic vP}\big].
\end{aligned}
\end{equation}
In other words, the Wigner density is  the inverse Fourier transform of the characteristic function $\trace \big[\rho e^{\ic uQ+\ic vP}\big].$
 If we define a function $\sigma_\alpha (x,y)$ by
  $$
  \int_{\mathbb{R}^2}\sigma_\alpha(x,y)e^{\ic xu-\ic yv}dxdy=\langle \alpha, W(u,v)\alpha\rangle_{L^2},
  $$
then, taking the Fourier transforms, we can express $ \rho_\alpha(x,y)$ as
 \begin{equation}\label{ghtlq}
 \rho_\alpha(x,y)=\int_{\mathbb{R}^2}\rho_W(x-\xi,y-\eta)\sigma_\alpha(\xi,\eta)d\xi d\eta.
 \end{equation}
  This shows that the probability density $\rho_\alpha(x,y)$  is the result of averaging the improper Wigner density by the function $\sigma_\alpha$, which depends on the  vector $\alpha\in L^2$ and reflects the inaccuracy of the measurements.

\quad

Recall that the conventional \emph{position observable} on the  Hilbert space  $\mathcal{H}=L_2(\mathbb{R})$ is the
projection-valued measure $q(\cdot)$ on $\mathbb{R},$ defined by
\begin{equation}\label{PVPosit}
\big(q(B)\psi\big)(x)=\chi_B(x)\psi(x), \quad \psi(x)\in L_2(\mathbb{R}),
\end{equation}
where $\chi_B$ is the  characteristic function of a Borel set $B\in \mathscr{B}(\mathbb{R})$. \emph{ The momentum observable} is \emph{projection-valued measure} $p(\cdot)$ on $\mathbb{R},$ defined by
\begin{equation}\label{PVMosit}
\big(p(B)\psi\big)(x)=-\ic \chi_B(x)\partial_x\psi(x).
\end{equation}

To formalize the random influence on measurements, a \emph{probability density function} (known or unknown) $f(x)$ on $\mathbb{R}$  is introduced and   the convolution
\begin{equation}\label{Conf-g}
  (f\ast g)(x)=\int_{-\infty}^\infty f(y)g(x-y)dy.
\end{equation}
is defined for a bounded measurable function $g$ on $\mathbb{R}$ .
  It can be shown (\cite{Davis}, Theorem 3.1) that  weakly convergent integrals \big(here $\widehat{f}$ is the Fourier transform of $f$ and $g\equiv\chi_E$\big)
\begin{gather}
 Q_f (E)=\int_\mathds{R} (f\ast\chi_E)(x)q(dx), \label{ObservQ} \\
 P_f (F)=\int_\mathds{R} (\widehat{f}\ast\chi_F)(k)p(dk),\label{ObservP}
\end{gather}
  uniquely define the so-called \emph{ approximate position and momentum observables} $Q_f$ and $P_f$, which   are POV measures on Borel sets $E,$ $F$ on $\mathds{R}$.  POV measures have the same properties as projection-valued (spectral) measures but \emph{take values in the set of positive  operators}, as the name suggests.
It is clear that the approximate observables defined by \ref{ObservQ}, \ref{ObservP} are commutative unsharp observables.


\section{Influence of state of charge} \label{charge}

First, we explain qualitative the effect of the charge state switching on the developed theoretical model.
We assume that each green laser pulse can fully switch the charge state, meaning that the ionisation rate at the power of 600 $\mu$W is around 3-5 MHz ($\approx 300^{-1} \, \mathrm{ns}^{-1}$), which results in switching times comparable to the duration of the laser pulse.
We assume that there is a probability $p^- \approx 70 \%$ of having $NV^-$ after the green pulse, independent on the history.
Second, we assume that the charge state is stable during the "dark time" of the measurement when the laser is switched off.
Next, we consider that the long pass 650 nm filter in the detection pass cuts most of the $NV^0$ fluorescence spectra, making it darker compared to negative charge state.
In the derivation of the correlation function, we use the fact that each measurement performs a measurement, hence imposes the back-action and causes the decay per measurement $ \alpha^2/4$.
The amplitude of the detected signal is given by the strength of the measurement $\sin \alpha$

The consequence of the charge state switching is thus two-fold: 1) the missed  $1-p^-$ part of the measurement due to NV0 state will not perturb the target spin state, making the decay of the correlation function smaller
$ \alpha^2 p^-/4$
and 2) The amplitude of the correlation function $\langle S_z S_z \rangle $ will tend to be smaller by $(p^-)^2$, due to the reduction of "useful" data, produced by NV-, and that $S_z$ produces 0 when NV0.

The latter effect is accounted by a contrast calibration procedure. The contrast shrinking of  $\Delta n = p^- (n_a - n_b)$ will be as well observed, and when photon correlation $\langle n_i n_{i+k}\rangle $ divided by the contrast $\Delta n$ the full amplitude of the $S_z$ correlation function will be restored under the assumptions that the $p^-$ stays the same under the both experiments, which is guaranteed by usage of same NV and same laser pulses.

The former is more delicate as it affects the decay constant of the nuclear spin precession under the measurements by factor of $p^-$, which clearly becomes visible on a longer correlation times.
Thus, the process of fitting the correlation function $S_z$ to the model function could give underestimated values of $\alpha$, which could lead to systematic error in its estimated values.

To show this issue quantitatively, we simulate the process of sequential weak measurements with initially polarized target spin using Monte Carlo method.
We compare results for the model with ideal case $p^- = 1.0 (100\% NV-)$ and realistic case of  $p^- = 0.7 (70\% NV-) $.

 Using the 1000 experimental runs of 100 weak measurements with $\alpha = 0.1 \pi$ corresponding roughly to KDD-XY3 sequence of NV2.
 We estimate the average $\langle I_x \rangle (N)$ evolution in both cases.
 We observe that the in the realistic case the decay is smaller, however due to noise its significance becomes clear only at long correlation times.

Next, we fit the numerically obtained $S_z$ data, with a model function $\sin \alpha \cos (\omega t_s N) \exp \{ - \alpha ^2 N/4\}$, and find $\alpha_{est}$. We divide the $S_z$ by $\sin \alpha_{est}$ and get $I_x$ reconstructed, and check the LG expression from them $LG(N) = 2 I_x(N) - I_x (2N)$

Analysing the reconstructed values we  conclude that the errors induced by charge state are less than statistical errors and do not affect significantly the correlation function relevant for the first two periods of oscillation, the region which is essential for its violation of the inequality.

Additionally, we note that it is possible to optimize the fitting procedure by weighting the correlation function points depending on their correlation length, in particular, we find that usage of the box car weighting (cutting the tail of the exponent) with window length equal to 1/3 of the original decay gives a much better estimate on the alpha values, due to the fact that on this data set, initial amplitude is more important than the decay. We point out that this could be a subject of future research in the field.

\quad

\acknowledgements

We are thankful to Professor Eric Lutz and Dr. Durga Dasari for fruitful discussions.
We acknowledge financial support by the German Science Foundation (the DFG) via SPP1601, FOR2724, the European Research Council (ASTERIQS, SMel, ERC grant 742610), the Max Planck Society, and the project QC4BW as well as QTBW.

\section*{Authors' contributions}
VV and OG developed the initial idea of the experiment. OG performed theoretical work, VV, JM performed the experimental work, HS, SO, JI synthesized and characterized the
sample, VV and OG performed data analysis, VV, OG, JW wrote the manuscript, all authors discussed and commented on the manuscripts.

\begin{figure*}
\begin{center}
\includegraphics[width =  \textwidth]{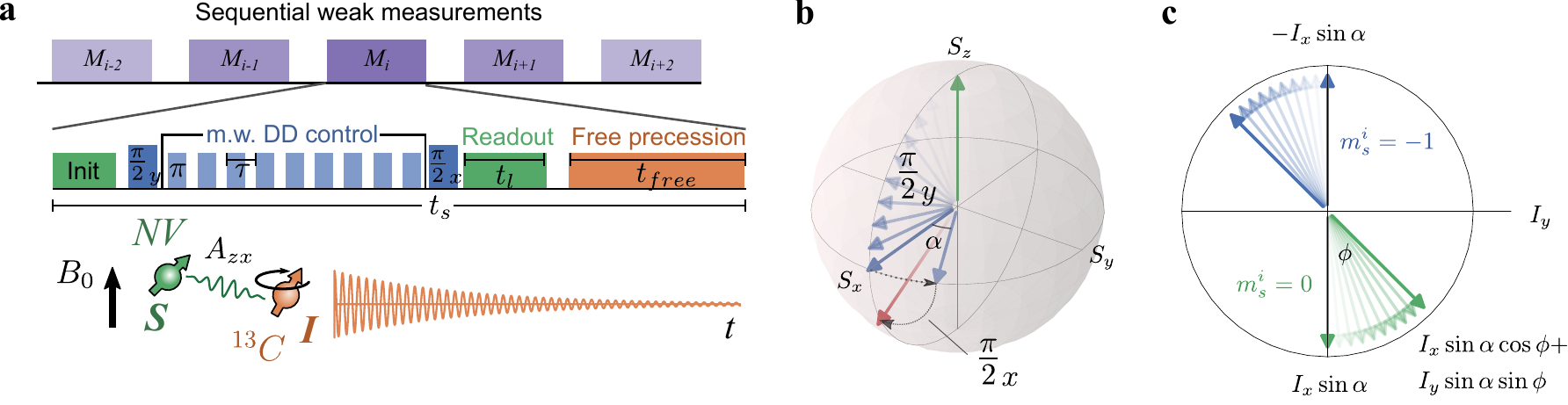}
\caption{\textbf{Scheme of the experiment}. \textbf{a)} The sequential weak measurements $M_i$ are composed of a sensor initialization part, dynamical decoupling (KDD-XY-n) and readout of the sensor state using conventional optical readout of an NV center electron spin. The KDD-XYn filter function is tuned via pulse spacing $\tau$ to the Larmor precession of the weakly coupled ($A_{zz}\approx 100 \, \mathrm{Hz}$)  $\mathrm{^{13}C}$ nuclear spin, which results in an effective interaction between nuclear and electron spin. The electron spin state is read out after each interaction, which leads to extraction of information about the nuclear spin and its back-action. \textbf{b)} Schematic evolution trajectory of an electron spin state in case the nuclear spin is in the $I_e+I_x$ state \textbf{c)} Schematic evolution of the nuclear spin during sequential measurements. Initially, the nuclear spin is in the thermally mixed state $\rho_0 = I_e$, which is then partially polarized by measurements along the $x$ axis with magnitude $\sin \alpha$. The free precession between the measurements leads to a rotation around the z axis, which is perpendicular to the figure plane. Subsequent measurements affect the measurement by disturbing both, the x and y component of the Bloch vector, conditioned on the measurement outcome (see Text).}
\label{fig1}
\end{center}
\end{figure*}

\begin{figure*}
\begin{center}
\includegraphics[width =  \textwidth]{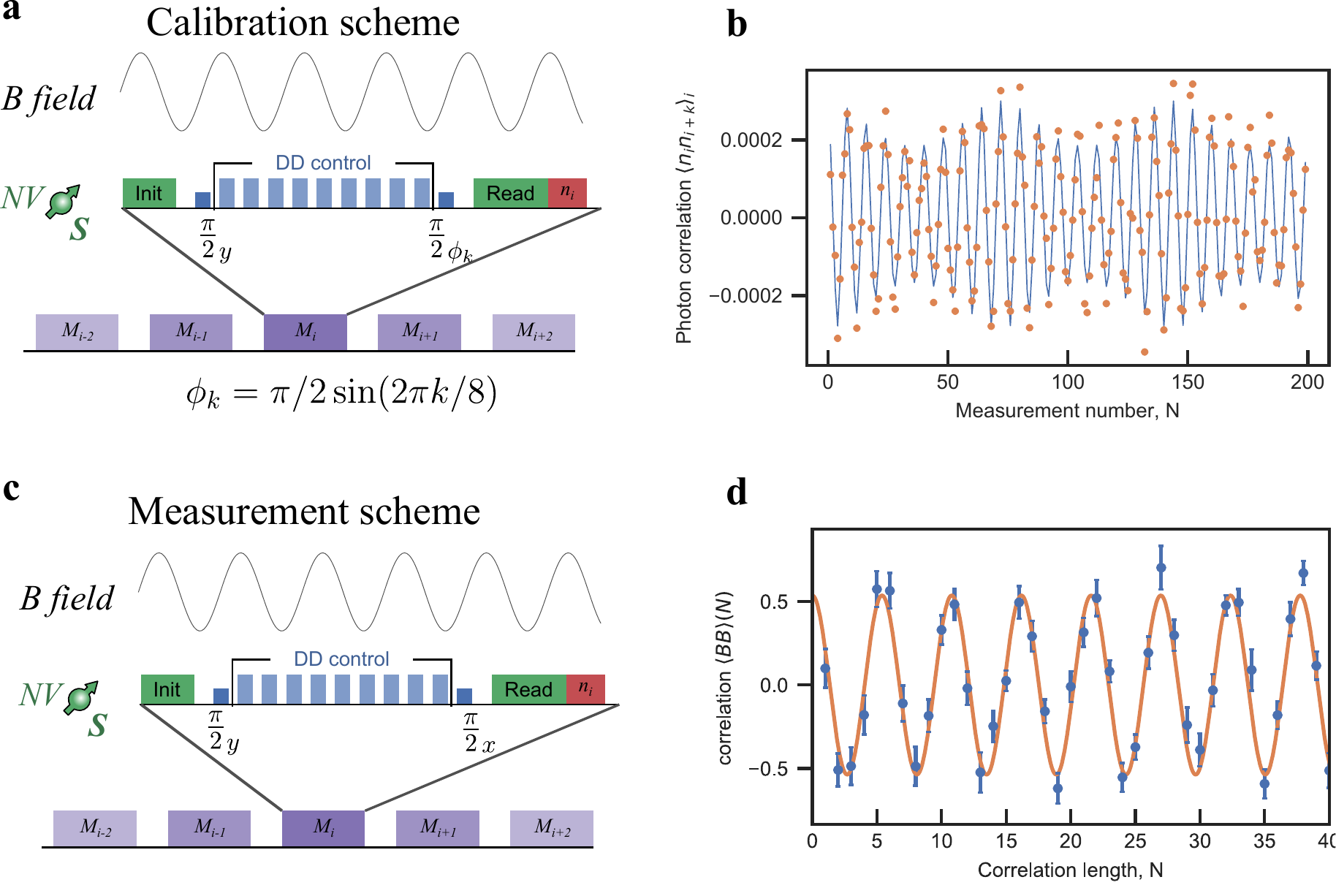}
\caption{\textbf{Externally applied classical field and calibration of fluorescence readout}. \textbf{a)} Experimental protocol for modulation assisted method for determining the fluorescence response of the NV spin readout $n_a$ and  $n_b$. Sequential measurements with optical readout of the electron spin yields phase information obtained by the interaction with the external signal. The readout $\pi/2$  pulse phase is sinusoidally modulated with an amplitude of $\pi/2$ and a period of 8 measurement cycles. \textbf{b)} The empirically calculated correlation of the centred photon counts numbers trace. The beating in the correlation originates from the presence of two frequencies. The size of the beating is determined by the relative amplitude of the external signal to phase modulation. The solid curve is the best fit of the analytical model of the correlation function \ref{eq9}, which includes phase modulation and the unknown external signal. \textbf{c)} Measurement protocol for the estimation of the classical signal correlation function. \textbf{d)} Reconstructed correlation function of the classical sinusoidal signal with a stochastic phase}
\label{fig2}
\end{center}
\end{figure*}

\begin{figure*}
\begin{center}
\includegraphics[width=\textwidth]{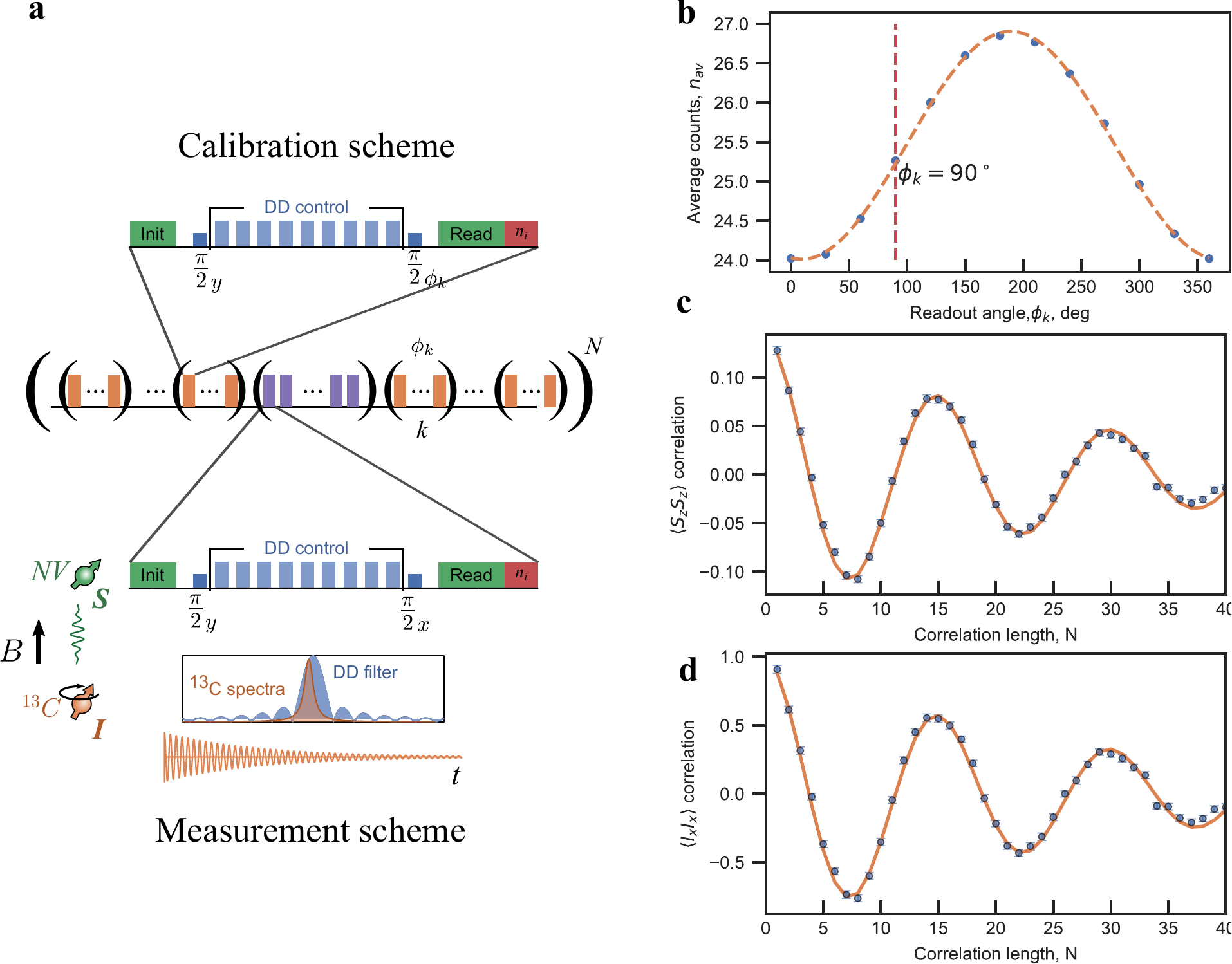}
\caption{\textbf{Reconstruction of quantum correlation function and calibration of fluorescence readout}. \newline\textbf{a)} Experimental protocol for modulation assisted method for determining fluorescence response of NV spin readout $n_a$ and  $n_b$. Sequential measurements with optical readout of electron spin results in idle measurements as DD filter function is detuned form $\mathrm{^{13}C}$ Larmor frequency. The readout $\pi/2$  pulse phase is sinusoidally modulated with amplitude $\pi/2$ and a period of 8 measurement cycles. \textbf{b)} Calibration of $n_a$  and $n_b$  for optical spin readout of the NV center electron spin. The depence of average photon counts on angle $\phi_k$ reveals the $n_a$ and $n_b$ photon counts.
The model analytical curve (see \ref{2.17}) shows that the response of the NV center depends on $n_a$ and $n_b$. Measurement scheme Fig. \ref{fig3}\textbf{a} operates at $\phi_k = 90$. \textbf{c)}. Empirically estimated correlation of sensor outputs. The solid curve is the best fit of the analytical model for correlation function which includes back-action induced decay of the initial amplitude accounted for single unknown parameter $\alpha$. \textbf{d)} Reconstructed correlation function of the quantum signal. }
\label{fig3}
\end{center}
\end{figure*}

\begin{figure*}
\begin{center}
\includegraphics[width=\textwidth]{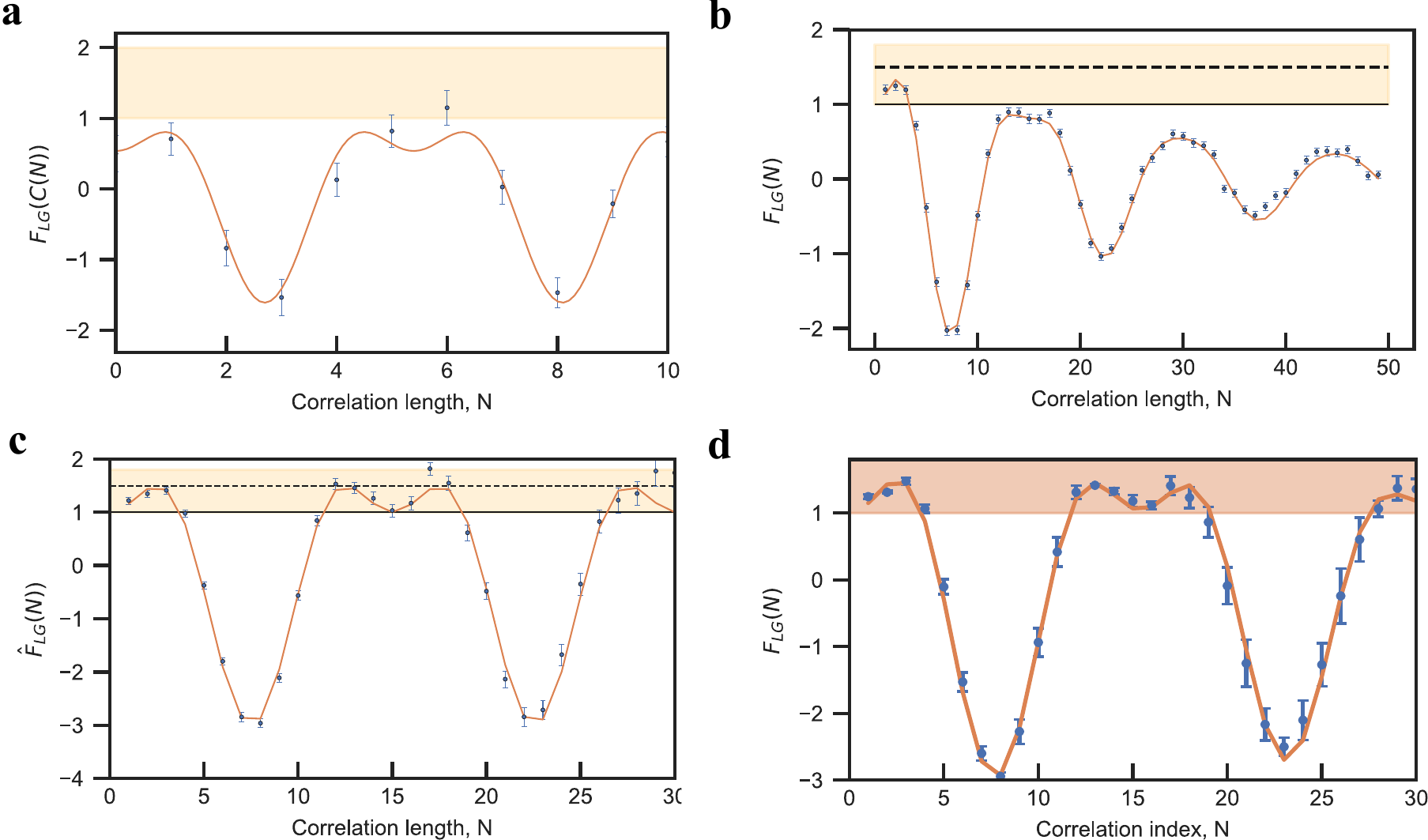}
\caption{\textbf{L-G functions for  the classical and quantum processes}. \textbf{a}) L-G function for  the  reconstructed autocorrelation of the classical r.f. signal. \textbf{b)}  NV2 measurements, L-G function for  empirical autocorrelations normalised only by $\sin^2\alpha$.  \textbf{c)} NV2 measurements, L-G function for  empirical autocorrelations normalised  by $\sin^2\alpha$ and dephasing factor \textbf{d})NV2 measurements, the result of averaging L-D functions over five experiments with different control sequences.}
\label{corr_f}
\end{center}
\end{figure*}

 \end{document}